\documentclass[review]{interact}
\usepackage{amsmath,amssymb}
\usepackage{amsthm}

\usepackage{arydshln}
\usepackage{bm}
\usepackage{fancybox}
\usepackage{color}
\usepackage[caption=false]{subfig}
\usepackage[numbers,sort&compress]{natbib}
\usepackage[inline]{enumitem}
\usepackage{comment}
\usepackage{here}
\usepackage{algorithmic}
\usepackage{algorithm}
\bibpunct[, ]{[}{]}{,}{n}{,}{,}
\makeatletter
\def\NAT@def@citea{\def\@citea{\NAT@separator}}
\makeatother

\theoremstyle{plain}
\newtheorem{ass}{Assumption} 

\newtheorem{thm}{Theorem}
\newtheorem{lemma}{Lemma}

\newtheorem{remark}{Remark}

\newtheorem{deff2}{Definition}

\usepackage{color}
\definecolor{UniBlue}{RGB}{30,30,150} 
\definecolor{rits2}{RGB}{174,10,20}
\definecolor{sky}{RGB}{50,100,150}
\definecolor{tnmborange}{RGB}{245, 86, 90}

\def\revrev#1{\textcolor{black}{#1}}

\usepackage{dsfont}
\usepackage{ascmac}
\usepackage{fancybox}
\usepackage{url}
\usepackage{multirow}
\usepackage{charter}

\begin{document}


\articletype{}

\title{
Data-Driven Sub-Optimal LQ Regulator for Linear Input-Delay Systems based on Informativity
}

\author{
\name{Kohei~Ayaka\textsuperscript{a}, 
\thanks{CONTACT  K.~Takaba. Email: ktakaba@fc.ritsumei.ac.jp} 
Takumi~Namba\textsuperscript{a} and Kiyotsugu Takaba\textsuperscript{a}}
\affil{\textsuperscript{a} Department of Electrical and Electronic Engineering, Ritsumeikan University, 1-1-1 Noji-Higashi, Kusatsu, Shiga, Japan}
}

\maketitle

\begin{abstract}
This paper proposes a novel informativity-based data-driven synthesis method for a sub-optimal linear quadratic (LQ) regulator for linear input-delay systems from noisy input-state data.
Exploiting the augmented state structure of input-delay systems with a known delay length, we derive a linear matrix inequality (LMI) condition for the data-driven synthesis of the augmented state-feedback controller that achieves a prescribed LQ performance level for every plant model consistent with the data.
 The proposed LMI condition enables efficient controller synthesis via convex optimization.  
 Numerical simulations demonstrate the effectiveness of the proposed method.
 The trade-off between the achievable LQ performance and the uncertainty in the data is also clarified through a numerical example. 
\end{abstract}

\begin{keywords}
\revrev{data-driven control; data informativity; linear input-delay systems; linear quadratic regulator; linear matrix inequality}
\end{keywords}

\section{Introduction}
\label{sec:introduction}

With the recent development of data science technologies, data-driven control, which designs a controller directly from plant response data, has been attracting considerable attention~\cite{sankou7,sankou1,sankou6,subopt_vanwaarde_2020,sankou2}. 
In contrast to model-based control, data-driven control can directly handle plant uncertainties inherent in the data and can reduces the time required for controller design by skipping the system identification process. 
Among various data-driven approaches, several studies have addressed direct data-driven control based on \emph{data informativity} proposed by van~Waarde~{\it et al.}\cite{sankou1,sankou2,subopt_vanwaarde_2020,sankou6}. 
The data informativity characterizes whether the available data are sufficient for achieving a given control objective, such as stabilization. 
In particular, they derived a linear matrix inequality (LMI) condition for the data-driven synthesis of a quadratically stabilizing controller from noisy data~\cite{sankou2}.

Linear input-delay systems arise in various applications, including process control, networked control, remote control, etc. 
If the delay length is known {\it a priori}, it is a common technique to represent the input-delay systems by an augmented state equation, in which the state vector is extended to include the delayed inputs~(see e.g.\:\cite{sankou3,sankou4}). 
Even in data-driven control, this structure should be explicitly exploited in controller synthesis to achieve the desired control performance.
Therefore, we consider a data-driven sub-optimal LQR synthesis for linear input-delay systems based on the augmented state structure. 

In this paper, we derive an LMI condition for the data-driven synthesis of the \emph{sub-optimal} linear quadratic (LQ) regulator for a linear input-delay system.
While stabilization is a fundamental requirement, practical control design also demands guarantees on control performance, 
such as fast and smooth state convergence and moderate control effort, etc.
A common method for meeting such a requirement is to formulate the optimal LQ regulator (LQR) synthesis by introducing a quadratic performance index.
Nevertheless, the optimal LQR solution cannot be obtained when the exact plant model is not available due to the response data being corrupted by noise.
To remedy this difficulty, this paper considers the data-driven \emph{sub-optimal} LQR synthesis, where the performance index is guaranteed to be below a prescribed level for every plant model consistent with the noisy data.
This approach enables us to synthesize a stabilizing controller that explicitly accounts for the trade-off between control performance and robustness against the data uncertainty.

There have been several works related to this paper. 
For linear systems without input delays, van Waarde and Mesbahi~\cite{subopt_vanwaarde_2020} and van Waarde, Camlibel, and Meshbahi~\cite{sankou6} addressed data-driven synthesis of the sub-optimal LQR regulator in the noiseless and noisy data cases, respectively. 
This paper extends the latter work to linear input-delay systems with a known delay length.

In the conference paper~\cite{sicefes_ayaka}, we derived an LMI condition for data-driven quadratic stabilization of a linear input-delay system by exploiting its augmented state structure.  
However, the analysis in \cite{sicefes_ayaka} focuses only on stabilization and provides no guarantees on control performance. 
To overcome this drawback, we introduce a quadratic performance index and devise an LMI-based data-driven method for synthesizing the augmented state feedback controller that guarantees a given sub-optimal LQ performance level. 

R.-Escobedo, Fridman, and Schiffer~\cite{sankou5} 
derived sufficient LMI conditions for data-driven synthesis of a state-feedback stabilizing controller for a linear system with unknown time delays based on the robust control theory. 
The main difference between this paper and the reference~\cite{sankou5} is that we synthesis the sub-optimal LQ regulator in the form of \emph{augmented state-feedback}, including the delayed input values, rather than the ordinary state-feedback.  
This is because we exploit the augmented state structure of the input-delay system with the known delay length.
Nevertheless, our LMI condition is \emph{compact} in the sense that it is derived using the input-state data of the original input-delay state equation, not those of the augmented state equation.

In view of the above discussion, we summarize the contributions of this paper as follows. 
Firstly, the proposed method enables us to obtain a feedback controller that achieves the guaranteed LQ performance for every plant model consistent with the data, even when the data is corrupted by noise.
Secondly, the proposed controller achieves a guaranteed LQ performance by exploiting the augmented state structure under the assumption that the delay length is known {\it a priori}.
Thirdly, we derive an LMI condition for data-driven synthesis of the sub-optimal LQ regulator.  
The desired controller can be obtained efficiently by solving the LMI.  This LMI condition also facilitates the analysis of the trade-off between the achievable LQ performance and the plant uncertainty inherent in the data.

The organization of this paper is as follows.
In Section~\ref{sect:2}, we introduce a data-based system description and the notion of data informativity. 
Then, we formulate the sub-optimal LQR problem for linear input-delay systems.  
We review the model-based solution to the sub-optimal LQR problem in Section~\ref{sect:3}. 
Some formulas in this section are also useful for data-driven control. 
In Section~\ref{sect:4}, we derive an LMI condition for data informativity for the sub-optimal LQR synthesis, as the main result of this paper.
Section~\ref{sect:5} presents the simulation results that demonstrate the effectiveness of the proposed method. 
Concluding remarks are given in Section~\ref{sect:6}.

\section{Problem Formulation}
\label{sect:2}

\subsection{Data-based System Description}
\label{sec:2-1}

Throughout this paper, we consider a linear discrete-time system with an input delay $d$ defined by the nominal plant model
\begin{align}\label{ids}
    x_{t+1} = A x_t + B u_{t-d}, 
\end{align}
where $x_t \in \mathbb{R}^{n}$ and $u_t \in \mathbb{R}^{m}$ are the state and control input, respectively. 
The matrix pair $(A, B)$ is unknown to the designer. 
Instead, the input-state response data $(x_t,\:u_{t-d})$, $t=t_0,\dots, t_0+T-1$ is given as well as the terminal state $x_{t_0+T}$.
We assume that the response data is corrupted by process noise $w_t \in \mathbb{R}^{n}$, and is represented as
\begin{align}
  x_{t+1} = A x_t + B u_{t-d} + w_t, 
\ \ \: t=t_0,t_0+1,\dots, t_0+T-1.
\label{eq:se_delay_noise}
\end{align}
\begin{ass} \label{assum:delay_length}
The state dimension $n$ and the delay length $d$ are known a priori.
\end{ass}

One possible scenario for the above assumption is 
a situation where the delay length $d$ is obtained from a transient response, such as a step response, starting from the steady state.

We form the data set $\mathcal{D}=(X_+,X_-,U_-^d)$, where 
$X_+$, $X_-$, and $U^d_-$ are the data matrices defined by 
\begin{equation*}
\begin{aligned}
U_-^d &= \begin{bmatrix} u_{t_0-d} & u_{t_0-d+1} & \cdots & u_{t_0-d+T-1} \end{bmatrix}, \\
X_- &= \begin{bmatrix} x_{t_0} & x_{t_0+1} & \cdots & x_{t_0+T-1} \end{bmatrix}, \\
X_+ &= \begin{bmatrix} x_{t_0+1} & x_{t_0+2} & \cdots & x_{t_0+T} \end{bmatrix},\\
W_-&= \begin{bmatrix} w_{t_0} & w_{t_0+1} & \cdots & w_{t_0+T-1} \end{bmatrix}.
\end{aligned}
\end{equation*}

Then the equation (\ref{eq:se_delay_noise}) is equivalently rewritten as 
\begin{equation}
\label{all_sys}
X_+ = AX_- + BU_-^d + W_- 
.
\end{equation}

We make the following assumption on $W_-$ to characterize the noise property~\cite{sankou2}. 
\begin{ass}
For a given symmetric matrix 
$\Phi =\begin{bmatrix} \Phi_{11} & \Phi_{12} \\ \Phi_{12}^\top & \Phi_{22} \end{bmatrix}$ with $\Phi_{22}<0$, 
the matrix $W_-$ satisfies 
\begin{equation}
\label{noise_assumption}
\begin{bmatrix}
I \\
W_-^\top
\end{bmatrix}^\top
\begin{bmatrix}
\Phi_{11} & \Phi_{12} \\
\Phi_{12}^\top & \Phi_{22}
\end{bmatrix}
\begin{bmatrix}
I \\
W_-^\top
\end{bmatrix} \ge 0
.
\end{equation}
\end{ass}
By combining \eqref{noise_assumption} with \eqref{all_sys}, we get 
\begin{equation}
\left[\!\!
\begin{array}{c}
I_n \\ \hdashline 
A^\top \\
B^\top
\end{array}
\!\!\right]^\top
\begin{bmatrix}
\Psi_{11} & \Psi_{12} \\
\Psi_{12}^\top & \Psi_{22}
\end{bmatrix}
\left[\!\!
\begin{array}{c}
I_n \\ \hdashline 
A^\top \\
B^\top
\end{array}
\!\!\right]
\ge 0, 
\label{all_system}
\end{equation}
where
\begin{align}
\Psi = 
\begin{bmatrix}
\Psi_{11} & \Psi_{12} \\ \Psi_{12}^\top & \Psi_{22}
\end{bmatrix}
=
\begin{bmatrix}
I & X_+ \\
0 & -X_- \\
0 & -U_-^d
\end{bmatrix}
\begin{bmatrix}
\Phi_{11} & \Phi_{12} \\
\Phi_{12}^\top & \Phi_{22}
\end{bmatrix}
\begin{bmatrix}
I & X_+ \\
0 & -X_- \\
0 & -U_-^d
\end{bmatrix}^\top
.
\label{eq:Psi}
\end{align}
Define the set of all models consistent with the data $\mathcal{D}$ by 
\begin{eqnarray}
\Sigma_{\mathcal{D}}:=\{(A, B)\in{\mathbb{R}}^{n\times n}\times {\mathbb{R}}^{ n\times m} \mid (\ref{all_system})\text{ is satisfied.}\}\notag
\end{eqnarray}

The notion of data informativity proposed by van Waarde \emph{et al.}~\cite{sankou1} provides a set-theoretic characterization of the response data necessary and sufficient to achieve a desired control objective such as stabilization. 
Let $\mathcal{S}$ be a desired control specification. 
Define $\Sigma_{{\mathcal{S}},K}$ as the set of systems that satisfy $\mathcal{S}$ by a given controller $K$. 
Then, the data informativity of a control system is stated in ~Definition~\ref{def:informativity}.

\begin{deff2}
\label{def:informativity}
The data $\mathcal{D}$ is said to be informative for the control specification $\mathcal{S}$ if there exists a controller $K$ such that $\Sigma_{\mathcal{D}} \subseteq \Sigma_{{\mathcal{S}},K}$.
\end{deff2}

In particular, if we take ``stabilization" as the specification $\mathcal{S}$,  $\Sigma_{{\mathcal{S}},K}$ represents the set of all systems stabilized by $K$. The informativity in this case means that there exists a controller $K$ which stabilizes every system in $\Sigma_{\mathcal{D}}$.

\begin{figure}[H]
\begin{centering}
\includegraphics[width=8cm]{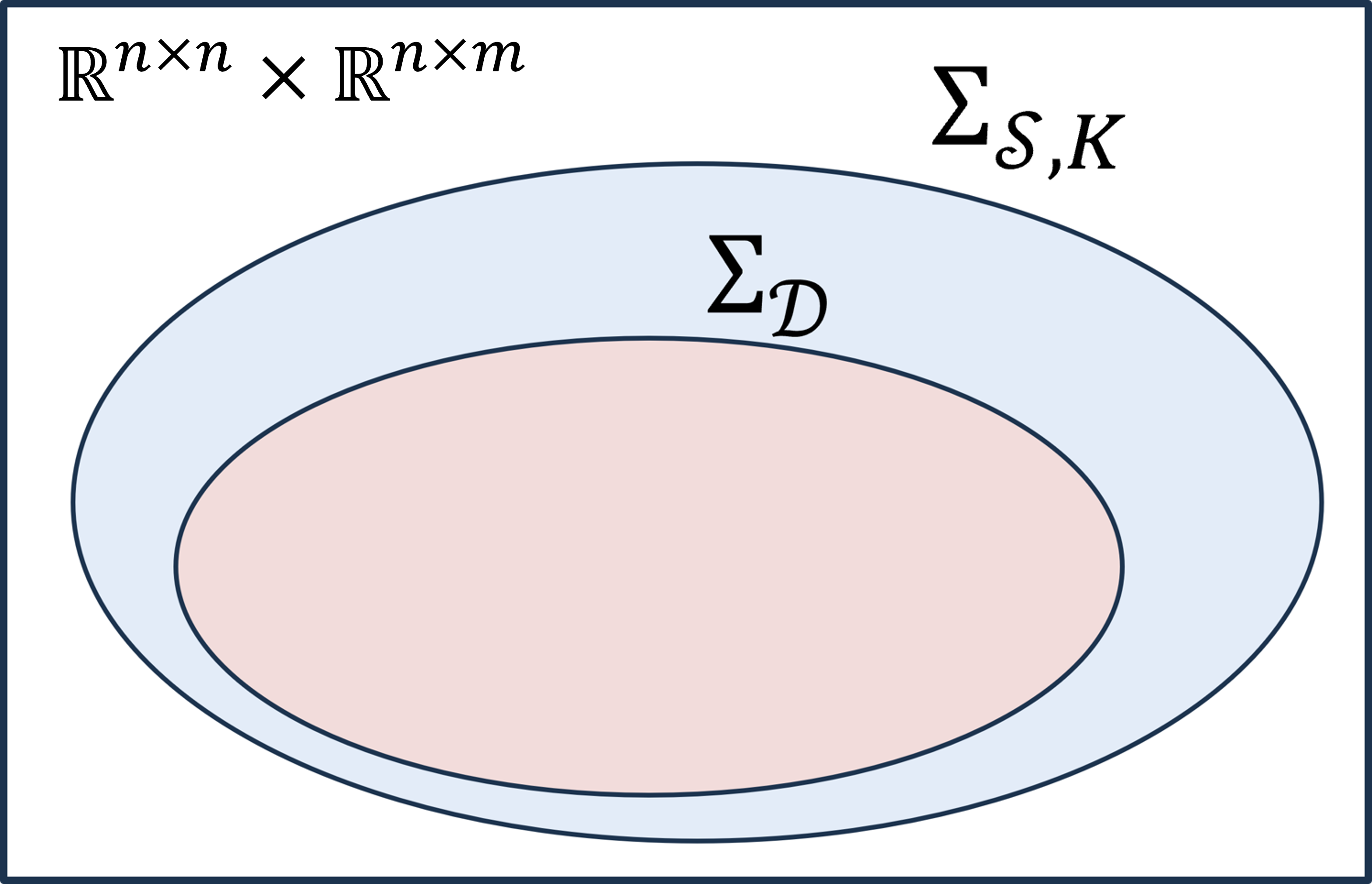}
\caption{Data informativity}
\end{centering}
\end{figure}

\subsection{Sub-Optimal Linear Quadratic Regulator Problem}\label{sect:2-2}
It is well known in the literature~\cite{sankou3,sankou4} that, by augmenting the state vector, the linear input-delay system (\ref{ids}) can be expressed by 
\begin{align*}
\left[\!\!\begin{array}{c}
x_{t+1} \\ \hdashline
u_{t-d+1} \\  
\vdots \\ 
u_{t-1} \\ \hdashline
u_t
\end{array}\!\!\right]
=&
\left[
\begin{array}{c:ccc:c}
A & B & 0 & \cdots & 0 \\ \hdashline
0 & 0 & I_m & \cdots & 0 \\
\vdots & \vdots & \vdots & \ddots & \vdots \\ 
0 & 0 & 0 & \cdots & I_m \\ \hdashline
0 & 0 & 0 & \cdots & 0
\end{array}
\right]
\left[\!\!
\begin{array}{c}
x_t \\ \hdashline
u_{t-d} \\ 
\vdots \\ 
u_{t-2} \\ \hdashline
u_{t-1}
\end{array}\right]  +
\left[\!\!\begin{array}{c}
0 \\ \hdashline
0 \\ 
\vdots \\ 
0 \\ \hdashline
I_m
\end{array}\!\!\right] u_{t}\:, 
\end{align*}
or in short hand, 
\begin{align}
X_{t+1} = \mathcal{A}X_t + \mathcal{B}u_t\, , 
\label{eq:5}
\end{align}
where 
\begin{align*}
X_t = 
\left[\!\!
\begin{array}{c}
x_t \\ \hdashline
u_{t-d} \\ 
\vdots \\ 
u_{t-2} \\ \hdashline
u_{t-1}
\end{array}\right]
,\ \ 
\mathcal{A}=
\left[
\begin{array}{c:ccc:c}
A & B & 0 & \cdots & 0 \\ \hdashline
0 & 0 & I_m & \cdots & 0 \\
\vdots & \vdots & \vdots & \ddots & \vdots \\ 
0 & 0 & 0 & \cdots & I_m \\ \hdashline
0 & 0 & 0 & \cdots & 0
\end{array}
\right],\ \ 
\mathcal{B}=
\left[\!\!\begin{array}{c}
0 \\ \hdashline
0 \\ 
\vdots \\ 
0 \\ \hdashline
I_m
\end{array}\!\!\right]
.
\end{align*}
It is observed from this equation that the system matrix has a sparse structure and that the plant model $\begin{bmatrix} A & B \end{bmatrix}$ appears only in the upper left block of the system matrix.

We introduce the quadratic performance index $J(X_0; u)$ as
\begin{align*}
J(X_0;\,u)= \sum_{t=0}^{\infty} 
\left(
{x}_t^\top Q_0 x_t 
+ \sum_{i=1}^{d} u_{t-d+i-1}^\top Q_i u_{t-d+i-1} +u_t^\top R u_t \right)
\end{align*}
where $Q_i$ ($i=0,1\dots,d$) and $R$ are positive semi-definite symmetric matrices of compatible dimensions.  
This performance index claims that the plant state $x_t$ should rapidly decay to the origin without excessive control effort. 
In particular, the penalties on the delayed inputs are also included in $J(X_0;u)$ to add generality.
The suboptimal linear quadratic regulator (LQR) problem is to synthesize a feedback controller 
that asymptotically stabilizes the closed-loop system and achieves the suboptimal performance
\begin{align}
J(X_0; u ) \le \gamma \|X_0\|^2 
\ \forall X_0 \in \mathbb{R}^{n+md}
\label{eq:subopt}
\end{align}
for a prescribed performance level $\gamma>0$.

By defining $Q=\mbox{\rm block-diag}(Q_0,Q_1,\dots,Q_d)$, $J(X_0;u)$ is rewritten 
in terms of the augmented state $X_t$ as 
\begin{align}
J(X_0; u) = \sum_{t=0}^\infty 
\left( X_t^\top Q X_t + u_t^\top R u_t \right)
.
\label{eq:cost}
\end{align}
Thus, the suboptimal LQR problem for the linear input-delay plant (\ref{ids}) reduces to that for the augmented plant (\ref{eq:5}), which is free of input delays.
Based on this observation, we apply the augmented state feedback controller
\begin{align}\label{fb_law}
u_t = 
KX_t=
K_0x_t+\sum_{i=1}^dK_iu_{t-d+i-1}
\end{align}
to the augmented state equation (\ref{eq:5})
, where 
$K=[\,K_0\:K_1\:\dots\:K_d\,]\in \mathbb{R}^{m\times (n+md)}$ is the feedback gain to be designed. 
Then, the closed-loop system is given by
\begin{align}
&
X_{t+1} = {\mathcal{A}_{cl}}X_t,
\ \ \ %
\mathcal{A}_{cl} = \mathcal{A}+\mathcal{B}K=
\left[
\begin{array}{cc:ccc}
\!\!A\!\! & \!B\! & \!0 & \cdots & 0 \!\! \\ \hdashline
\!\!0\!\! & \!0\! & \! I_m & \cdots & 0 \!\! \\
\!\!\vdots\!\! & \!\vdots\! & \!\vdots & \ddots & \vdots \!\! \\ 
\!\!0 \!\! & \!0\! &\! 0 & \cdots & I_m \!\! \\ \hdashline
\!\!K_0\!\! & \!K_1\! &\! K_2 & \cdots & K_d \!\!
\end{array}
\right].
\label{eq:cls} 
\end{align}
The feedback gain $K$ is called \emph{a sub-optimal feedback gain} if it stabilizes the closed-loop matrix $\mathcal{A}_{cl}$ and achieves (\ref{eq:subopt}) for a given $\gamma>0$.

\section{Model-Based Sub-optimal LQR Synthesis}
\label{sect:3}

In this section, we review the model-based synthesis of the sub-optimal LQ regulator under the following assumption.

\begin{ass}
\leavevmode\\[-1em]
\begin{itemize}
\item[(i)] $(A,B)$ is available for controller synthesis.
\item[(ii)] $(A,B)$ is stabilizable.
\end{itemize}
\end{ass}

The following lemma is standard in the LQ regulator theory (see e.g.\ \cite{sankou6}).

\begin{lemma}
Let a gain matrix $K$ and a positive constant $\gamma$ be given. 
If there exists a positive definite symmetric matrix $S$ satisfying 
\begin{align}
\mathcal{A}_{\mathrm{cl}}^{\top} S \mathcal{A}_{\mathrm{cl}}
- S + Q + K^{\top} R K &< 0,
\label{eq:lqr_ineq} \\
S &\le \gamma I_{n+md},
\label{eq:S_bound}
\end{align}
then $K$ is a sub-optimal feedback gain for $\gamma$, which stabilizes the closed-loop matrix $\mathcal{A}_{\mathrm{cl}}$ and achieves 
\begin{align}
J(X_0; u)
\le X_0^{\top} S X_0
\le \gamma \| X_0 \|^2\ \ \forall X_0 \in \mathbb{R}^{n+md}.
\label{eq:cost_bound}
\end{align}
\end{lemma}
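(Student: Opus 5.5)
The plan is the usual Lyapunov argument, applied to the closed-loop augmented system \eqref{eq:cls} with the quadratic function $V(X)=X^\top S X$.

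\emph{Stability.} Since $S>0$, $Q\ge 0$ and $R\ge 0$, the term $Q+K^\top R K$ is positive semidefinite. Inequality \eqref{eq:lqr_ineq} therefore gives
\begin{align*}
\mathcal{A}_{\mathrm{cl}}^\top S\mathcal{A}_{\mathrm{cl}}-S<0 .
\end{align*}
This is the discrete Lyapunov inequality, so $\mathcal{A}_{\mathrm{cl}}$ is Schur. For a direct argument, let $\lambda$ be an eigenvalue of $\mathcal{A}_{\mathrm{cl}}$ with eigenvector $v\neq 0$. Then $(|\lambda|^2-1)v^* S v<0$, and since $v^*Sv>0$ we get $|\lambda|<1$. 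Hence every closed-loop trajectory $X_t=\mathcal{A}_{\mathrm{cl}}^t X_0$ tends to zero, and does so exponentially.

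\emph{Cost bound.} Along the closed loop we have $u_t=KX_t$. Pre- and post-multiplying \eqref{eq:lqr_ineq} by $X_t^\top$ and $X_t$ gives
\begin{align*}
X_{t+1}^\top S X_{t+1}-X_t^\top S X_t + X_t^\top Q X_t + u_t^\top R u_t \le 0 .
\end{align*}
The inequality is strict whenever $X_t\neq 0$, and non-strict in general. Summing from $t=0$ to $N-1$ telescopes to
\begin{align*}
\sum_{t=0}^{N-1}\left(X_t^\top Q X_t+u_t^\top R u_t\right)\le X_0^\top S X_0 - X_N^\top S X_N \le X_0^\top S X_0 ,
\end{align*}
using $S>0$ in the last step. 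The partial sums are nonnegative and nondecreasing, and they are bounded by $X_0^\top S X_0$. Letting $N\to\infty$ and using \eqref{eq:cost} gives $J(X_0;u)\le X_0^\top S X_0$. (The term $X_N^\top S X_N$ also tends to $0$ by stability, but we only need it to be nonnegative.) Finally, \eqref{eq:S_bound} gives $X_0^\top S X_0\le\gamma\|X_0\|^2$, which is \eqref{eq:cost_bound} and hence \eqref{eq:subopt}.

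No step here is a real obstacle. The only point needing care is the passage to the infinite sum, which is handled by monotone convergence of nonnegative partial sums. The argument never uses the structure of $\mathcal{A}$ or $\mathcal{B}$, so it applies verbatim to the augmented plant \eqref{eq:5}.
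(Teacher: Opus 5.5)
Your proof is correct: the strict Lyapunov inequality with $S>0$ gives Schur stability of $\mathcal{A}_{\mathrm{cl}}$, and evaluating \eqref{eq:lqr_ineq} along $X_{t+1}=\mathcal{A}_{\mathrm{cl}}X_t$, $u_t=KX_t$, then telescoping and using $S\le\gamma I_{n+md}$, yields $J(X_0;u)\le X_0^\top S X_0\le\gamma\|X_0\|^2$. The paper gives no proof of its own and only calls the lemma standard, citing the LQR literature; your Lyapunov-decrease and telescoping argument is exactly that standard proof.
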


Define $P = S^{-1}$.
By pre- and post-multiplying \eqref{eq:lqr_ineq} by $P$, we obtain 
\begin{align*}
- P \mathcal{A}_{\mathrm{cl}}^{\top} P^{-1} \mathcal{A}_{\mathrm{cl}} P
+ P - P Q P - P K^{\top} R K P > 0. 
\end{align*}

Since $P > 0$, applying the Schur complement formula to the above inequality yields
\begin{align}
\begin{bmatrix}
P & \mathcal{A}_{\mathrm{cl}} P \\
P \mathcal{A}_{\mathrm{cl}}^{\top}
& P - P Q P - P K^{\top} R K P
\end{bmatrix}
> 0.
\label{eq:lmi_basic}
\end{align}
A further application of the Schur complement formula, together with the change of variables $L=KP$, results in
\begin{align}
\begin{bmatrix}
P & \mathcal{A} P + \mathcal{B} L & 0 & 0 \\
(\mathcal{A} P + \mathcal{B}L)^{\top}
& P & P Q^{1/2} & L^{\top} R^{1/2} \\
0 & Q^{1/2} P & I_{n+p} & 0 \\
0 & R^{1/2} L & 0 & I_m
\end{bmatrix}
> 0.
\label{eq:lmi_extended}
\end{align}

In addition, since $P=S^{-1}>0$, (\ref{eq:S_bound}) is equivalent to
\begin{align}
\begin{bmatrix}
\gamma I_{n+md} & I_{n+md} \\
I_{n+md} & P
\end{bmatrix}
> 0.
\label{eq:lmi_gamma}
\end{align}

\newpage

\begin{thm}
Under Assumption~3, a necessary and sufficient condition for the existence
of a sub-optimal feedback gain $K$ for a given $\gamma>0$ is that there exist a symmetric matrix
$P \in \mathbb{R}^{(n+md)\times(n+md)}$ 
and a matrix $L \in \mathbb{R}^{m\times(n+md)}$ satisfying
\eqref{eq:lmi_extended} and \eqref{eq:lmi_gamma}.
In this case, such a feedback gain is given by
\begin{equation}
K = L P^{-1}. 
\end{equation} 
\end{thm}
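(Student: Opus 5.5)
Sufficiency and necessity are both handled by showing that \eqref{eq:lmi_extended}--\eqref{eq:lmi_gamma} are equivalent, under $P=S^{-1}$ and $L=KP$, to the pair \eqref{eq:lqr_ineq}--\eqref{eq:S_bound} with $S>0$. Sufficiency then follows from Lemma~1. Necessity needs a converse of Lemma~1: from any sub-optimal gain we must produce an $S$ satisfying \eqref{eq:lqr_ineq} strictly and \eqref{eq:S_bound}.

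\emph{Sufficiency.} Let $P$ and $L$ satisfy \eqref{eq:lmi_extended} and \eqref{eq:lmi_gamma}. Positivity of the $(1,1)$ block of \eqref{eq:lmi_extended} gives $P>0$, so we may set $K=LP^{-1}$ and $S=P^{-1}$. Taking the Schur complement of \eqref{eq:lmi_extended} with respect to the identity blocks $\mathrm{diag}(I,I_m)$ recovers \eqref{eq:lmi_basic}, since $\mathcal{A}P+\mathcal{B}L=\mathcal{A}_{\mathrm{cl}}P$. A second Schur complement, with respect to $P$, gives
\[
P-P\mathcal{A}_{\mathrm{cl}}^\top S\mathcal{A}_{\mathrm{cl}}P-PQP-PK^\top RKP>0 .
\]
Congruence by $S=P^{-1}$ turns this into \eqref{eq:lqr_ineq}. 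From \eqref{eq:lmi_gamma} and $P>0$, the Schur complement gives $\gamma I-P^{-1}>0$, so \eqref{eq:S_bound} holds. Lemma~1 then shows that $K$ is sub-optimal. Each of these steps is reversible, so any $S>0$ with $S<\gamma I$ satisfying \eqref{eq:lqr_ineq} yields a feasible $(P,L)$.

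\emph{Necessity.} This is the main obstacle, because Lemma~1 only goes one way. Let $K$ be sub-optimal: $\mathcal{A}_{\mathrm{cl}}$ is Schur and $J(X_0;KX)\le\gamma\|X_0\|^2$ for all $X_0$. Define $S_0=\sum_{t\ge0}(\mathcal{A}_{\mathrm{cl}}^\top)^t(Q+K^\top RK)\mathcal{A}_{\mathrm{cl}}^t$. This is the Lyapunov solution with equality in \eqref{eq:lqr_ineq}, and it satisfies $J=X_0^\top S_0X_0$, hence $S_0\le\gamma I$. Two defects remain: the inequality is not strict, and $S_0$ may be only semidefinite.

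To repair both, add a small perturbation $\epsilon M$, where $M=\sum_t(\mathcal{A}_{\mathrm{cl}}^\top)^t\mathcal{A}_{\mathrm{cl}}^t>0$ satisfies $\mathcal{A}_{\mathrm{cl}}^\top M\mathcal{A}_{\mathrm{cl}}-M=-I$. Then $S_\epsilon=S_0+\epsilon M>0$, and $S_\epsilon$ satisfies \eqref{eq:lqr_ineq} strictly with left-hand side $-\epsilon I$. The catch is that $S_\epsilon\le\gamma I$ can fail when $S_0$ touches $\gamma I$.

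I would resolve this by reading ``sub-optimal'' with the bound holding strictly, i.e.\ $S_0<\gamma I$, so that $\epsilon$ small suffices and $S_\epsilon<\gamma I$ gives the strict inequality \eqref{eq:lmi_gamma}. This is consistent with the strict inequality in \eqref{eq:lmi_gamma} itself, which already forces $S<\gamma I$. Setting $P=S_\epsilon^{-1}$ and $L=KP$ and reversing the sufficiency chain then gives \eqref{eq:lmi_extended} and \eqref{eq:lmi_gamma}. Alternatively, in the boundary case one could lower the target to $\gamma-\delta$, so the equivalence is exact up to the strict/non-strict boundary in $\gamma$.

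Finally, note where Assumption~3 enters. Part (i) is needed because the LMIs involve $A$ and $B$ explicitly through $\mathcal{A}$. Part (ii), stabilizability of $(A,B)$, guarantees that $(\mathcal{A},\mathcal{B})$ is stabilizable, so some stabilizing $K$ exists. Feasibility for a large enough $\gamma$ follows from that.
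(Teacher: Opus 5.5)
Your sufficiency argument is the paper's own derivation ($P=S^{-1}$, two Schur complements, $L=KP$) run in reverse, and it is correct. The real difference is in necessity. The paper never argues it: it treats Lemma~1 as if it were an equivalence and asserts that \eqref{eq:S_bound} is equivalent to \eqref{eq:lmi_gamma}. Your Lyapunov-sum $S_0$ with the perturbation $S_0+\epsilon M$ supplies exactly the missing converse of Lemma~1, and it is correct.

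Your boundary caveat is genuine, not an artifact of your method. With the strict inequality \eqref{eq:lmi_gamma} and the non-strict \eqref{eq:subopt}, the literal statement is false at the threshold level. Here is a counterexample within Assumption~3:
\begin{itemize}
\item Take $Q>0$, $R>0$, and let $S^\ast$ be the stabilizing Riccati solution of the augmented plant, with optimal gain $K^\ast$. Set $\gamma=\lambda_{\max}(S^\ast)$.
\item Then $K^\ast$ is sub-optimal in the paper's sense.
\item Any feasible $(P,L)$ gives $S=P^{-1}$ with $S-S_K=\sum_t(\mathcal{A}_{\mathrm{cl}}^\top)^tW\mathcal{A}_{\mathrm{cl}}^t\ge W>0$. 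Here $S_K$ is the Lyapunov solution for $K=LP^{-1}$ (your $S_0$), and $W>0$ is minus the left-hand side of \eqref{eq:lqr_ineq}.
\item Since $S_K\ge S^\ast$, this forces $\lambda_{\max}(S)>\gamma$, contradicting $S<\gamma I$.
\end{itemize}
So your strict reading of sub-optimality is required, not merely convenient, and with it your proof is complete.

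One wording fix. The clean form of your ``alternatively'' remark is this: a gain that is sub-optimal for $\gamma$ in the non-strict sense makes the LMIs feasible at every $\gamma'>\gamma$ (take $\epsilon$ small in $S_\epsilon$). Hence LMI feasibility and existence of a sub-optimal gain can differ only at that single threshold value.
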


\section{Data-Driven Sub-optimal LQR Synthesis}
\label{sect:4}

Consider the data-driven sub-optimal LQR synthesis for the plant (\ref{ids}) under the situation where the plant model $(A,B)$ is unknown.
We make the following assumption in addition to Assumptions~1 and~2.

\begin{ass}
\label{ass:data_set}
The data set $\mathcal{D} = (X_+, X_-, U_-^d)$ defined in Sub-section~2.1 is available for controller synthesis.
\end{ass}

Define $\Sigma_{{\mathrm{subopt}},{\mathcal{P}},K}$ as the set of all $(A,B)$ for which (\ref{eq:lqr_ineq}) and (\ref{eq:S_bound}) are satisfied for a positive definite matrix ${\mathcal{P}}$ and an augmented state feedback gain $K=[K_0\ K_1\:\cdots\:K_d]\in {\mathbb{R}}^{m\times (n+md)}$. 
In view of Definition~\ref{def:informativity}, we define the data informativity for sub-optimal LQR control of the linear input-delay system as follows.

\begin{deff2}
Let a performance level $\gamma$ be given. 
The data ${\mathcal{D}}=(X_{+},X_{-},U_{-}^d)$ is said to be informative for sub-optimal LQR control if there exists a feedback gain $K=
\begin{bmatrix}
K_0 & K_1 & \cdots & K_d
\end{bmatrix}\in{\mathbb{R}}^{m \times (n+md)}$ and a positive definite matrix ${\mathcal{P}}\in {\mathbb{R}}^{(n+md)\times (n+md)}$  satisfying $\Sigma_{\mathcal{D}}\subseteq\Sigma_{{\mathrm{subopt}}, {\mathcal{P}},K}$, i.e., (\ref{eq:lqr_ineq}) and (\ref{eq:S_bound}) are satisfied for all $(A,B)\in\Sigma_{\mathcal{D}}$. 
\end{deff2}

To derive an LMI condition for the informativity for sub-optimal LQR,  we first recap the matrix inequality (\ref{eq:lmi_basic}).
Let $\Pi = P - P Q P - P K^\top R K P$. Then, by the Schur complement, 
the inequality~\eqref{eq:lmi_basic} is equivalent to
\begin{equation}
\label{eq:6.3}
\Pi > 0 \quad \text{and} \quad 
P - \mathcal{A}_{cl} P \Pi^{-1} P {\mathcal{A}_{cl}}^\top > 0.
\end{equation}
Notice that the closed-loop matrix $\mathcal{A}_{cl}$ can be expressed as  
\begin{equation}
\label{eq:6.4}
\mathcal{A}_{cl} = 
\left[
\begin{array}{c:c}
Z^\top & 0 \\
0 & I_{m(d-1)} \\
K_{0:1} & K_{2:d}
\end{array}
\right]
,\ \ \  
Z= [\,A\ \:B\,]^\top,
\notag
\end{equation}
Accordingly, we partition $P$ as
\begin{equation}
\label{eq:6.6}
\setlength{\dashlinedash}{2pt} 
\setlength{\dashlinegap}{2pt}  
\setlength{\arrayrulewidth}{0.3pt} 
\begin{aligned}
P &=
\left[
\begin{array}{cc:cc}
P_{00} & P_{10}^\top & \cdots & P_{d0}^\top \\
P_{10} & P_{11} & \cdots & P_{d1}^\top \\ \hdashline
\vdots & \vdots & \ddots & \vdots \\ 
P_{d0} & P_{d1} & \cdots & P_{dd}^\top
\end{array}
\right]
=: 
\left[ \begin{array}{c:c} \hat{P}_a & \hat{P}_b^\top \\ \hdashline \hat{P}_b & \hat{P}_c \end{array}\right]=:
\left[
\begin{array}{c}
P_a \\ \hdashline
P_b
\end{array}
\right]
\end{aligned}
.\notag
\end{equation}
Then the second inequality in \eqref{eq:6.3} can be rewritten as
\begin{equation}
\label{eq:6.5}
\begin{bmatrix} 
\hat{P}_a & \hat{P}_b \\ \hat{P}_b^\top & \hat{P}_c 
\end{bmatrix} 
> 
\begin{bmatrix}
Z^\top {Q}_1 Z & Z^\top {Q}_2^\top \\
{Q}_2 Z & {Q}_3 \end{bmatrix},
\end{equation}
where 
\begin{align}
Q_1 &= P_a \Pi^{-1} {P_a}^\top, \notag\\
Q_2 &= 
\begin{bmatrix}
P_b \Pi^{-1} {P_a}^\top \\
(K_{0:1} P_a + K_{2:d} P_b) \Pi^{-1} {P_a}^\top
\end{bmatrix}, \notag\\
Q_3 &=
\begin{bmatrix}
P_b \Pi^{-1} {P_b}^\top & P_b \Pi^{-1} ({P_a}^\top K_{0:1}^\top + {P_b}^\top K_{2:d}^\top) \\
(K_{0:1} P_a + K_{2:d} P_b) \Pi^{-1} {P_b}^\top & (K_{0:1} P_a + K_{2:d} P_b) \Pi^{-1} ({P_a}^\top K_{0:1}^\top + {P_b}^\top K_{2:d}^\top)
\end{bmatrix}\notag
\end{align}

Since the plant model consistent with the data $\mathcal{D}$ is not unique, we need to achieve the LQ sub-optimality for all such plant models. 
For this purpose, we will derive a necessary and sufficient condition for (\ref{eq:6.5}) to hold for every $Z=[A\ B]^\top\in \Sigma_{\mathcal{D}}$.

To derive such a condition, we devise a modified version of the matrix S-lemma~\cite{sankou2} by expanding the block size of the matrices. 

\begin{lemma}[Expanded matrix S-lemma]
\label{lem:expanded_S_lemma}
Define 
\begin{align*}
\!\!\!\!\!\!%
{\mathcal{S}}_N := \left\{ Z \in {\mathbb{R}}^{m \times n} \;\middle|\ \ %
\begin{bmatrix}
I_n \\ Z
\end{bmatrix}^\top
\begin{bmatrix}
N_a & N_b^\top \\
N_b & N_c
\end{bmatrix}
\begin{bmatrix}
I_n \\ Z
\end{bmatrix} \ge 0
\right\}
\subset {\mathbb{R}}^{m\times n}
\end{align*}
for a symmetric matrix  
$
N = \begin{bmatrix} 
N_a & N_b^\top \\
N_b & N_c 
\end{bmatrix} 
\in {\mathbb{R}}^{(n+m) \times (n+m)}
$
satisfying 
$
N_c \preceq 0$ and $\ker N_c \subseteq \ker N_b^\top
$.
Let also symmetric matrices $P$ and $Q$  be given as 
\begin{align*}
& P = \begin{bmatrix}
P'_a & {P'_b}^\top \\
P'_b & P'_c
\end{bmatrix}\in {\mathbb{R}}^{(n+\ell) \times (n+\ell)}, 
\ \ 
Q = \begin{bmatrix}
Q_a & Q_b^\top \\
Q_b & Q_c
\end{bmatrix}\in {\mathbb{R}}^{(m+\ell) \times (m+\ell)},\ \ 
Q_a \ge 0.
\end{align*}
Then, 
\begin{align}
\label{eq:8}
    \begin{bmatrix}
P'_a & {P'_b}^\top \\
P'_b & P'_c
\end{bmatrix}
>
\begin{bmatrix}
Z^\top Q_a Z & Z^\top Q_b^\top \\
Q_b Z & Q_c
\end{bmatrix}
\quad \forall Z \in \mathcal{S}_N
\end{align}
holds if and only if there exist constants $\alpha \geq 0$ and $\varepsilon > 0$ satisfying
\begin{align}
\label{eq:9}
\begin{bmatrix}
P'_a - \varepsilon I_n & 0 & {P'_b}^\top \\
0 & -Q_a & -Q_b^\top \\
P'_b & -Q_b & P'_c - Q_c-\varepsilon'I_\ell
\end{bmatrix}
- \alpha
\begin{bmatrix}
N_a & N_b^\top & 0 \\
N_b & N_c & 0 \\
0 & 0 & 0
\end{bmatrix}
\ge 0.
\end{align}
\end{lemma}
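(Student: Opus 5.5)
The plan is to reduce the expanded statement to the strict matrix S-lemma of van Waarde, Camlibel and Mesbahi~\cite{sankou2,sankou6}. That result says the following. Let $M=\begin{bmatrix} M_a & M_b^\top\\ M_b & M_c\end{bmatrix}$ with $M_c\le 0$, and let $N$ satisfy $N_c\le 0$ and $\ker N_c\subseteq\ker N_b^\top$. Then $\begin{bmatrix} I_n\\ Z\end{bmatrix}^\top M\begin{bmatrix} I_n\\ Z\end{bmatrix}>0$ for all $Z\in\mathcal{S}_N$ if and only if there exist $\alpha\ge 0$ and $\beta>0$ with $M-\alpha N\ge \mathrm{diag}(\beta I_n,0)$.

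The first observation is a factorization. The difference between the two sides of \eqref{eq:8} equals $\Theta_Z^\top M_0\,\Theta_Z$, where
\begin{align*}
\Theta_Z=\begin{bmatrix} I_n & 0\\ Z & 0\\ 0 & I_\ell\end{bmatrix},\qquad
M_0=\begin{bmatrix} P'_a & 0 & {P'_b}^\top\\ 0 & -Q_a & -Q_b^\top\\ P'_b & -Q_b & P'_c-Q_c\end{bmatrix}.
\end{align*}
This follows by direct expansion of the blocks.

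\textbf{Sufficiency.} I will treat $\varepsilon'$ in \eqref{eq:9} as a second positive constant; it may be taken equal to $\varepsilon$ by shrinking. Pre- and post-multiply \eqref{eq:9} by $\Theta_Z^\top$ and $\Theta_Z$. The $N$-term becomes $\alpha\,\mathrm{diag}(F(Z),0)$, where $F(Z)=\begin{bmatrix} I\\ Z\end{bmatrix}^\top N\begin{bmatrix} I\\ Z\end{bmatrix}$. Since $F(Z)\ge 0$ for $Z\in\mathcal{S}_N$, this gives
\[
\Theta_Z^\top M_0\Theta_Z\ \ge\ \mathrm{diag}(\varepsilon I_n,\varepsilon' I_\ell)+\alpha\,\mathrm{diag}(F(Z),0)\ >\ 0,
\]
which is \eqref{eq:8}.

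\textbf{Necessity.} I would eliminate the $\ell$-block by a Schur complement. Assume \eqref{eq:8} holds. Then $\mathcal{S}_N$ is nonempty; otherwise the statement should be read with that caveat. Evaluating at any single $Z$ gives $R:=P'_c-Q_c>0$. By the Schur complement, \eqref{eq:8} is then equivalent to
\[
\begin{bmatrix} I\\ Z\end{bmatrix}^\top\tilde M\begin{bmatrix} I\\ Z\end{bmatrix}>0\quad\forall Z\in\mathcal{S}_N,
\]
where
\[
\tilde M=\begin{bmatrix} P'_a-{P'_b}^\top R^{-1}P'_b & {P'_b}^\top R^{-1}Q_b\\ Q_b^\top R^{-1}P'_b & -Q_a-Q_b^\top R^{-1}Q_b\end{bmatrix}.
\]
Here is where $Q_a\ge 0$ enters: it gives $\tilde M_c=-Q_a-Q_b^\top R^{-1}Q_b\le 0$, which is exactly the hypothesis the strict matrix S-lemma requires. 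Applying that lemma yields $\alpha\ge 0$ and $\beta>0$ with $\tilde M-\alpha N\ge\mathrm{diag}(\beta I_n,0)$.

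Lifting back uses the Schur complement in reverse. Note that $\tilde M$ is precisely the Schur complement of the $(3,3)$ block $R$ in $M_0$. Since $N$ is padded with zeros, the Schur complement of $R$ in $M_0-\alpha\hat N$ is $\tilde M-\alpha N$, where $\hat N$ denotes the zero-padded $N$ appearing in \eqref{eq:9}. Now replace $R$ by $R-\varepsilon' I$ with $0<\varepsilon'<\lambda_{\min}(R)$. The Schur complement changes by a perturbation that is continuous in $\varepsilon'$ and vanishes at $\varepsilon'=0$, and that perturbation is confined to the rows and columns of $P'_b$ and $Q_b$. Hence for $\varepsilon'$ small it is dominated by $\mathrm{diag}(\tfrac{\beta}{2}I_n,0)$ plus a term in the $Z$-block that can be absorbed.

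\textbf{Main obstacle.} This absorption step is the delicate part. The perturbation also touches the $(2,2)$ block $-Q_b^\top(R-\varepsilon'I)^{-1}Q_b$, where $\tilde M-\alpha N$ has no positive margin. I would try to handle it by keeping $\varepsilon'$ inside the argument from the start. Apply the S-lemma to $\tilde M_{\varepsilon'}$, built with $R-\varepsilon' I$ in place of $R$. The strict inequality \eqref{eq:8} is an open condition, so by compactness arguments it should persist for small $\varepsilon'$. The cleanest route is to note that for $\varepsilon'>0$ small, \eqref{eq:8} still holds with $P'_c$ replaced by $P'_c-\varepsilon' I$. This holds uniformly over the possibly unbounded set $\mathcal{S}_N$, because $\ker N_c\subseteq\ker N_b^\top$ and $N_c\le0$ control the growth of $Z$ along directions where $Q_a$ vanishes; this uniformity is the point needing care. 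Then run the argument above with $R-\varepsilon' I$ and set $\varepsilon=\beta$, which yields \eqref{eq:9}.
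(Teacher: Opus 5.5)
Your route is the paper's own. Sufficiency is by congruence with $\Theta_Z$. Necessity first secures a margin $\varepsilon'$, Schur-complements the block $P'_c-Q_c-\varepsilon' I_\ell$, applies the strict matrix S-lemma (your $\tilde M_{\varepsilon'}$ is the paper's matrix built with $\Xi$), and Schur-complements back. You also correctly saw why the margin should be secured before the S-lemma is invoked: a perturbation made afterwards lands in the $(2,2)$ block, which has no slack.

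The genuine gap, which the paper shares, is that this uniform margin is only asserted. The paper's reason (``$\mathcal{S}_N$ is closed'') does not suffice, because $\mathcal{S}_N$ may be unbounded. Your heuristic has the logic slightly reversed. The hypotheses on $N$ only identify the unbounded directions of $\mathcal{S}_N$; it is the strictness of \eqref{eq:8} that forces \emph{both} $Q_a$ and $Q_b$ to vanish along them.

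Separately, \eqref{eq:8} does not imply $\mathcal{S}_N\neq\emptyset$, so nonemptiness must be added as a hypothesis; the paper uses it tacitly. For example, with $N_a=-I_n$, $N_b=0$, $N_c=0$, condition \eqref{eq:8} is vacuous while \eqref{eq:9} still demands $P'_c-Q_c>0$.

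To close the gap, proceed as follows.

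\textbf{Decomposition.} Write $Z=Z_1+Z_2$ with the columns of $Z_1$ in $\mathrm{im}\,N_c$ and those of $Z_2$ in $\ker N_c$. Since $N_cZ_2=0$ and, by $\ker N_c\subseteq\ker N_b^\top$, also $N_b^\top Z_2=0$, you get $F(Z)=F(Z_1)$. So $Z\in\mathcal{S}_N$ iff $Z_1\in\mathcal{K}:=\{Z_1:\ F(Z_1)\ge 0\}$, with $Z_2$ unrestricted.

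\textbf{Compactness.} The set $\mathcal{K}$ is compact. Indeed, $v^\top N_c v\le-\lambda\|v\|^2$ on $\mathrm{im}\,N_c$ for some $\lambda>0$, and taking traces in $F(Z_1)\ge0$ gives $\lambda\|Z_1\|_F^2\le\mathrm{tr}\,N_a+2\|N_b\|_F\|Z_1\|_F$.

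\textbf{Vanishing of $Q_a$ and $Q_b$ on $\ker N_c$.} Let $G(Z):=\Theta_Z^\top M_0\Theta_Z$. The quadratic form of $G(Z_1+tZ_2)$ at $(x,y)$ equals $c_0-2t\,[(Z_1x)^\top Q_aZ_2x+y^\top Q_bZ_2x]-t^2(Z_2x)^\top Q_aZ_2x$, where $c_0$ is its value at $t=0$. Since $Z_1+tZ_2\in\mathcal{S}_N$ for every $t\in\mathbb{R}$, positivity first forces $Q_aZ_2x=0$ (using $Q_a\ge0$), and then $y^\top Q_bZ_2x=0$. Hence $Q_aZ_2=0$, $Q_bZ_2=0$, and $G(Z)=G(Z_1)$ on $\mathcal{S}_N$.

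\textbf{The margin.} By continuity and compactness, $\varepsilon':=\tfrac12\min_{Z_1\in\mathcal{K}}\lambda_{\min}(G(Z_1))>0$. This gives \eqref{eq:8} with $Q_c$ replaced by $Q_c+\varepsilon' I_\ell$, together with $P'_c-Q_c-\varepsilon' I_\ell>0$. From there your Schur-complement / S-lemma / Schur-complement chain goes through unchanged, with $\varepsilon=\beta$.
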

\begin{proof} 
See Appendix~A.
\end{proof}

By Lemma~\ref{lem:expanded_S_lemma}, the inequality (\ref{eq:6.5}) is satisfied for every 
$[A\ B]\in \Sigma_{\mathcal{D}}$ if and only if there exist constants $\alpha \ge 0$, $\varepsilon, \varepsilon' > 0$ such that
\begin{equation}
\label{eq:6.12_english}
\begin{bmatrix}
{\hat{P}}_a - \varepsilon I_n & 0 & {\hat{P}}_b^\top \\
0 & -Q_1 & -Q_2^\top \\
{\hat{P}}_b & -Q_2 & {\hat{P}}_c - Q_3 - \varepsilon' I_{md}
\end{bmatrix} 
- \alpha 
\begin{bmatrix}
\Psi_a & \Psi_b^\top & 0 \\
\Psi_b & \Psi_c & 0 \\
0 & 0 & 0
\end{bmatrix} \ge 0.
\end{equation}
We further partition $\hat{P}_b$ and $\hat{P}_c$ as
\begin{equation}
\label{eq:L_def_english}
{\hat{P}}_b:=\left[\begin{matrix}{\hat{P}}_{b0}\\{\hat{P}}_{b1}\\\end{matrix}\right], {\hat{P}}_c:=\left[\begin{matrix}{\hat{P}}_{c00}&{{\hat{P}}_{c10}}^\top\\{\hat{P}}_{c10}&{\hat{P}}_{c11}\\\end{matrix}\right]. \notag
\end{equation}
Then, the inequality \eqref{eq:6.12_english} is expressed as
\begin{equation}
\label{eq:original_LMI_english}
\begin{bmatrix}
{\hat{P}}_a - \varepsilon I_n - \alpha \Psi_a & -\alpha \Psi_b^\top & {\hat{P}}_{b0}^\top & {\hat{P}}_{b1}^\top \\
-\alpha \Psi_b & -\alpha \Psi_c & 0 & 0 \\
{\hat{P}}_{b0} & 0 & {\hat{P}}_{c00} - \varepsilon' I_{m(d-1)} & {\hat{P}}_{c10} \\
{\hat{P}}_{b1} & 0 & {\hat{P}}_{c10} & {\hat{P}}_{c11} - \varepsilon' I_m
\end{bmatrix}
-
\begin{bmatrix}
0 \\ P_a \\ P_b \\ L
\end{bmatrix} \Pi^{-1}
\begin{bmatrix}
0 \\ P_a \\ P_b \\ L
\end{bmatrix}^\top
\ge 0. \notag
\end{equation}
Noting $\Pi = P - P Q P - L^\top R L>0$, repeated application of the Schur complement formula to this inequality leads to the LMI
\begin{equation}
\label{eq:main_LMI}
\begin{bmatrix}
{\hat{P}}_a - \varepsilon I_n - \alpha \Psi_a & -\alpha \Psi_b^\top & {\hat{P}}_{b0}^\top & {\hat{P}}_{b1}^\top & 0 & 0 & 0 \\
-\alpha \Psi_b & -\alpha \Psi_c & 0 & 0 & P_a & 0 & 0 \\
{\hat{P}}_{b0} & 0 & {\hat{P}}_{c00} - \varepsilon' I_{m(d-1)} & {\hat{P}}_{c10}^\top & P_b & 0 & 0 \\
{\hat{P}}_{b1} & 0 & {\hat{P}}_{c10} & {\hat{P}}_{c11} - \varepsilon' I_m & L & 0 & 0 \\
0 & P_a^\top & P_b^\top & L^\top & P & L^\top R^{1/2} & P Q^{1/2} \\
0 & 0 & 0 & 0 & R^{1/2} L & I_m & 0 \\
0 & 0 & 0 & 0 & Q^{1/2} P & 0 & I_{n + md}
\end{bmatrix} \ge 0.
\end{equation}

Regarding the performance index, similarly to the model-based synthesis case, $J$ can be upper-bounded by
\begin{equation}
\label{eq:performance_index_english}
\begin{bmatrix}
\gamma I_{n+p} & I_{n+p} \\
I_{n+p} & P
\end{bmatrix} > 0.
\end{equation}

Consequently, we obtain the following theorem as the main result of this paper.

\begin{thm}
\label{thm:main_result_LQR}
Given the data $\mathcal{D} = (X_+, X_-, U_-^d)$ and a constant $\gamma > 0$, assume that there exist a symmetric matrix $P \in \mathbb{R}^{(n+md) \times (n+md)}$, a matrix $L \in \mathbb{R}^{n \times (n+md)}$, and scalar constants $\alpha \ge 0$, $\varepsilon > 0$, $\varepsilon' > 0$ satisfying the LMIs \eqref{eq:main_LMI} and \eqref{eq:performance_index_english}.
Then, the data $\mathcal{D}$ is informative for the sub-optimal LQR control, namely there exists a feedback gain $K$ that stabilizes all models in $\Sigma_{\mathcal{D}}$ and achieves the sub-optimality 
\[
J(X_0;u) \le \gamma \| X_0 \|^2\ \:\forall X_0 \in \mathbb{R}^{n+md}
\]
for all models in $\Sigma_{\mathcal{D}}$. 
In this case, one such feedback gain is given by
\begin{equation}
\label{eq:K_feedback_LQR}
K = 
\begin{bmatrix}
K_0 & K_1 & \cdots & K_d
\end{bmatrix}
= L P^{-1}.
\end{equation}
\end{thm}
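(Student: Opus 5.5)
The plan is to run the derivation preceding the statement backwards: start from a feasible tuple $(P,L,\alpha,\varepsilon,\varepsilon')$, show that $P$ is positive definite and that $\Pi>0$, and then use Lemma~\ref{lem:expanded_S_lemma} to recover \eqref{eq:6.5} for every $(A,B)\in\Sigma_{\mathcal{D}}$.

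\textbf{Step 1: positivity of $P$ and definition of $K$.} The LMI \eqref{eq:performance_index_english} has a positive definite $(1,1)$ block $\gamma I$. By the Schur complement it gives $P-\gamma^{-1}I>0$, hence $P>0$ and $P^{-1}<\gamma I$. Taking $S:=P^{-1}$, this is exactly \eqref{eq:S_bound}. Because $P$ is invertible, we may define $K:=LP^{-1}$, so that $L=KP$ and $\Pi=P-PQP-PK^\top RKP$ is the same matrix as in the model-based section.

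\textbf{Step 2: peeling off the Schur complements in \eqref{eq:main_LMI}.} The last two diagonal blocks $I_m$ and $I_{n+md}$ are positive definite. Taking the Schur complement with respect to them replaces the fifth diagonal block by $P-L^\top RL-PQP=\Pi$ and leaves a $5\times5$ block inequality. I then need $\Pi>0$ strictly, so that a further Schur complement produces the subtracted term $[0;P_a;P_b;L]\Pi^{-1}[\cdot]^\top$.

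This is the main obstacle, because \eqref{eq:main_LMI} is only non-strict. My first attempt is to use the strict margins $\varepsilon I_n$ and $\varepsilon' I$ together with the coupling blocks $P_a,P_b,L$. These blocks stack to $\begin{bmatrix}P\\L\end{bmatrix}$ up to the $\Psi$ rows, so they have full column rank since $P>0$. Suppose $\Pi v=0$ for some $v\neq0$. Positive semidefiniteness then forces the off-diagonal column $\begin{bmatrix}P_a\\P_b\\L\end{bmatrix}v$ to vanish, in particular $Pv=0$, which is a contradiction. Hence $\Pi\ge0$ is nonsingular, so $\Pi>0$. If this argument needs care, I would instead note that strictness of the LMI can be assumed without loss by a small perturbation.

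\textbf{Step 3: S-lemma, closed-loop inequality and conclusion.} With $\Pi>0$, the remaining inequality is \eqref{eq:6.12_english}, using $Q_1,Q_2,Q_3$ defined from $\Pi^{-1}$. I will check that the hypotheses of Lemma~\ref{lem:expanded_S_lemma} hold with $N=\Psi$, $\ell=md$ and $Q_a=Q_1\ge0$. The condition $\Psi_c\le0$ follows from $\Phi_{22}<0$. The kernel condition is the standard one from~\cite{sankou2}, which I take as satisfied here. The sufficiency direction of Lemma~\ref{lem:expanded_S_lemma} then gives \eqref{eq:6.5} for all $Z\in\Sigma_{\mathcal{D}}$.

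Inequality \eqref{eq:6.5} is the second inequality in \eqref{eq:6.3}, and together with $\Pi>0$ this is equivalent to \eqref{eq:lmi_basic}. Undoing the congruence by $P^{-1}=S$ yields \eqref{eq:lqr_ineq} for every consistent model. Combined with \eqref{eq:S_bound}, Lemma~1 shows that $K$ stabilizes $\mathcal{A}_{cl}$ and satisfies $J\le X_0^\top SX_0\le\gamma\|X_0\|^2$ for all $(A,B)\in\Sigma_{\mathcal{D}}$. This is the claimed informativity.
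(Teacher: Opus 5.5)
Your proposal is correct and follows the paper's route: you run the derivation preceding the theorem backwards (Schur complements on \eqref{eq:main_LMI}, the sufficiency half of Lemma~\ref{lem:expanded_S_lemma} with $N=\Psi$, then Lemma~1 with $S=P^{-1}$), and your observation that $\Pi v=0$ forces $P_a v=P_b v=0$, hence $Pv=0$, validly gives $\Pi>0$ from the non-strict LMI, a step the paper leaves implicit. Two small remarks: the sufficiency half of the lemma is a direct congruence argument that uses neither $\Psi_c\le 0$ nor the kernel condition, so nothing needs to be ``taken as satisfied,'' and you should drop your perturbation fallback, because \eqref{eq:main_LMI} can never hold strictly when $\Psi_c$ is singular (its $(2,2)$ block is $-\alpha\Psi_c$).
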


\begin{remark} 
The upper-left $5\times 5$ block of the LMI \eqref{eq:main_LMI} implies 
\begin{align*}
\begin{bmatrix}
{\hat{P}}_a - \varepsilon I_n - \alpha \Psi_a & -\alpha \Psi_b^\top & {\hat{P}}_{b0}^\top & {\hat{P}}_{b1}^\top & 0   \\
-\alpha \Psi_b & -\alpha \Psi_c & 0 & 0 & P_a  \\
{\hat{P}}_{b0} & 0 & {\hat{P}}_{c00} - \varepsilon' I_{m(d-1)} & {\hat{P}}_{c10}^\top & P_b  \\
{\hat{P}}_{b1} & 0 & {\hat{P}}_{c10} & {\hat{P}}_{c11} - \varepsilon' I_m & L  \\
0 & P_a^\top & P_b^\top & L^\top & P 
\end{bmatrix} \ge 0.
\end{align*}
Since this is nothing but the stabilization condition in Theorem~1 in the reference~\cite{sicefes_ayaka}, 
we can say that the above theorem constitutes an extension of the authors' previous work~\cite{sicefes_ayaka}.
\end{remark}

Notice that the matrix inequalities \eqref{eq:main_LMI} and \eqref{eq:performance_index_english} are affine with respect to $\gamma$. 
Hence, considering $\gamma$ as one of the decision variables, we can compute the achievable performance level $\gamma_{\rm opt}$ by solving the following convex optimization problem.

\begin{equation}
\label{eq:gamma_opt_english}
\text{(CP)}\inf_{P, L,  \alpha, \varepsilon, \varepsilon', \gamma} \gamma \quad \text{subject to \eqref{eq:main_LMI}, \eqref{eq:performance_index_english}}, \alpha\ge 0, \varepsilon>0, \varepsilon'>0, \gamma>0. \notag
\end{equation}

We conclude this section by summarizing the algorithm for the data-driven synthesis of a sub-optimal LQR based on Theorem~\ref{thm:main_result_LQR}.

\begin{algorithm}[H]
\caption{sub-optimal LQR Synthesis}
	\label{alg2}
\textbf{Given:\quad} $n, m, d, {\mathcal{D}}= (X_+,X_-, U_-^d) $ 
\begin{enumerate}
\renewcommand{\labelenumi}{\texttt{\arabic{enumi}:~}}
\item{Determine $\Phi_{11}=\Phi_{11}^\top$, $\Phi_{12}$, and $\Phi_{22}=\Phi_{22}^\top\prec0$ based on the assumed property of the noise $w$.}
\item{Compute the matrix $\Psi$ in \eqref{eq:Psi}.}
\item{Solve the problem (CP) by an existing convex optimization algorithm.}
\item{Obtain the desired feedback gain $K=\begin{bmatrix}
K_0 & K_1 & \cdots & K_d
\end{bmatrix}$ by (\ref{eq:K_feedback_LQR}).}
\end{enumerate}
\end{algorithm}

\section{Numerical Example}
\label{sect:5}
In this section, we verify the effectiveness of the proposed method through numerical examples.
We conducted all numerical computations and simulations using MATLAB R2023b and MOSEK 10.0.43.
\subsection{Simulation setup}
Consider the linear input-delay system (1) with the \emph{true} model 
\begin{equation}
A_s=
\left[\begin{matrix}
1.3&0.5\\ 
0&1.2 
\end{matrix}\right],\quad
B_s=
\left[\begin{matrix}
1\\ 
1
\end{matrix}\right]\notag.
\end{equation}
We assume that the delay length $d=4$ has been identified \emph{a priori} by some means.

We collect the data ${\mathcal{D}}=(X_+,X_-,U_-^d)$ with $T=10$ by injecting the sinusoidal input $u_t=5 \sin 10t$ to 
\begin{eqnarray}
x_{t+1} = A_sx_t + B_su_{t-d}+w_t, \notag
\end{eqnarray}
where the noise $w_t$ is a pseudo-random number sequence generated by 
a zero-mean white Gaussian noise sequence with covariance $1.0\times10^{-2} I_2$.
The obtained noisy data trajectories along with the input signal are depicted in Figure~\ref{fig:sim_data}.

\begin{figure}[!bth]
\begin{centering}
\includegraphics[width=10cm]{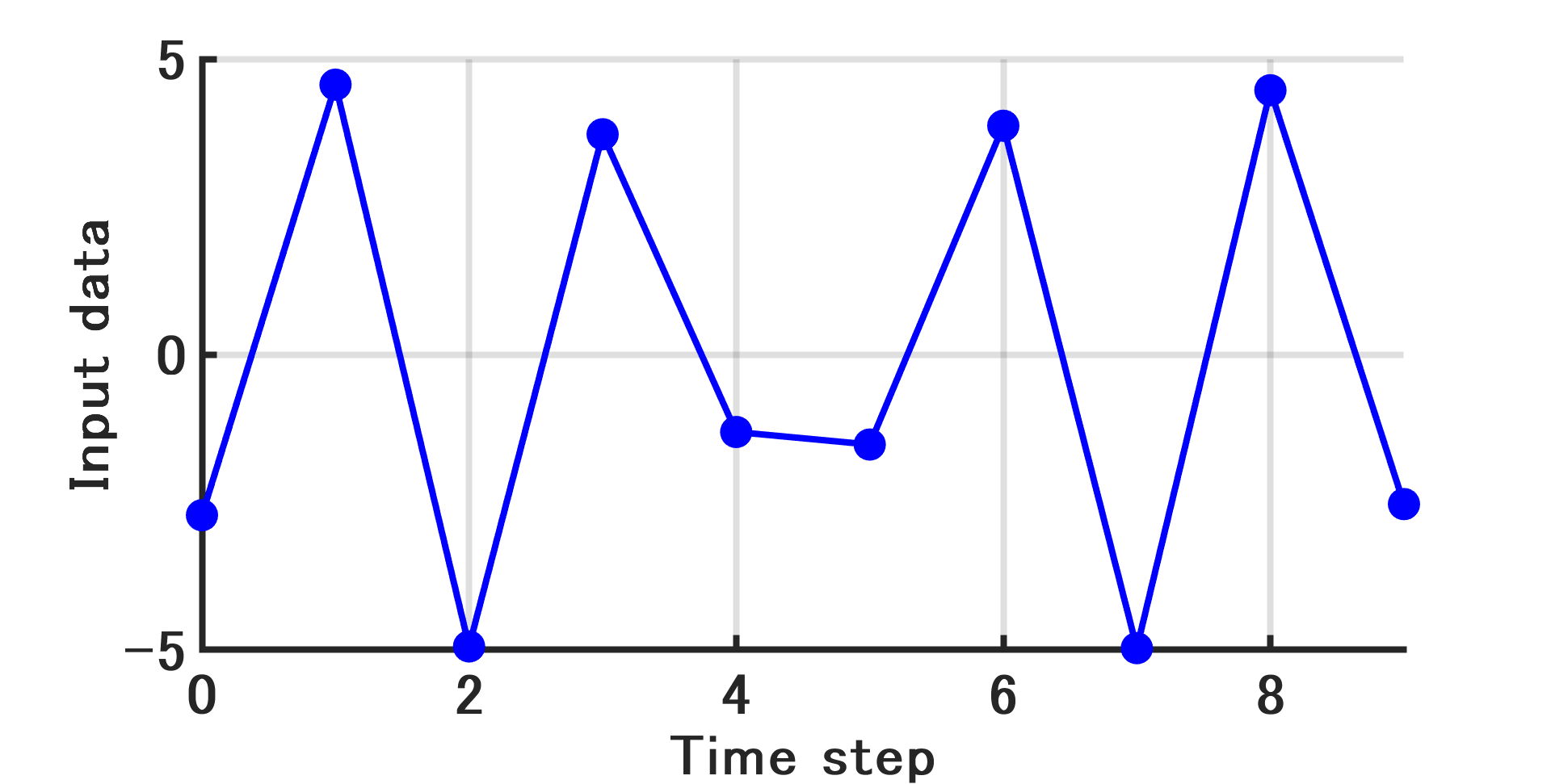}\\
(a) Input data 

\medskip

\includegraphics[width=10cm]{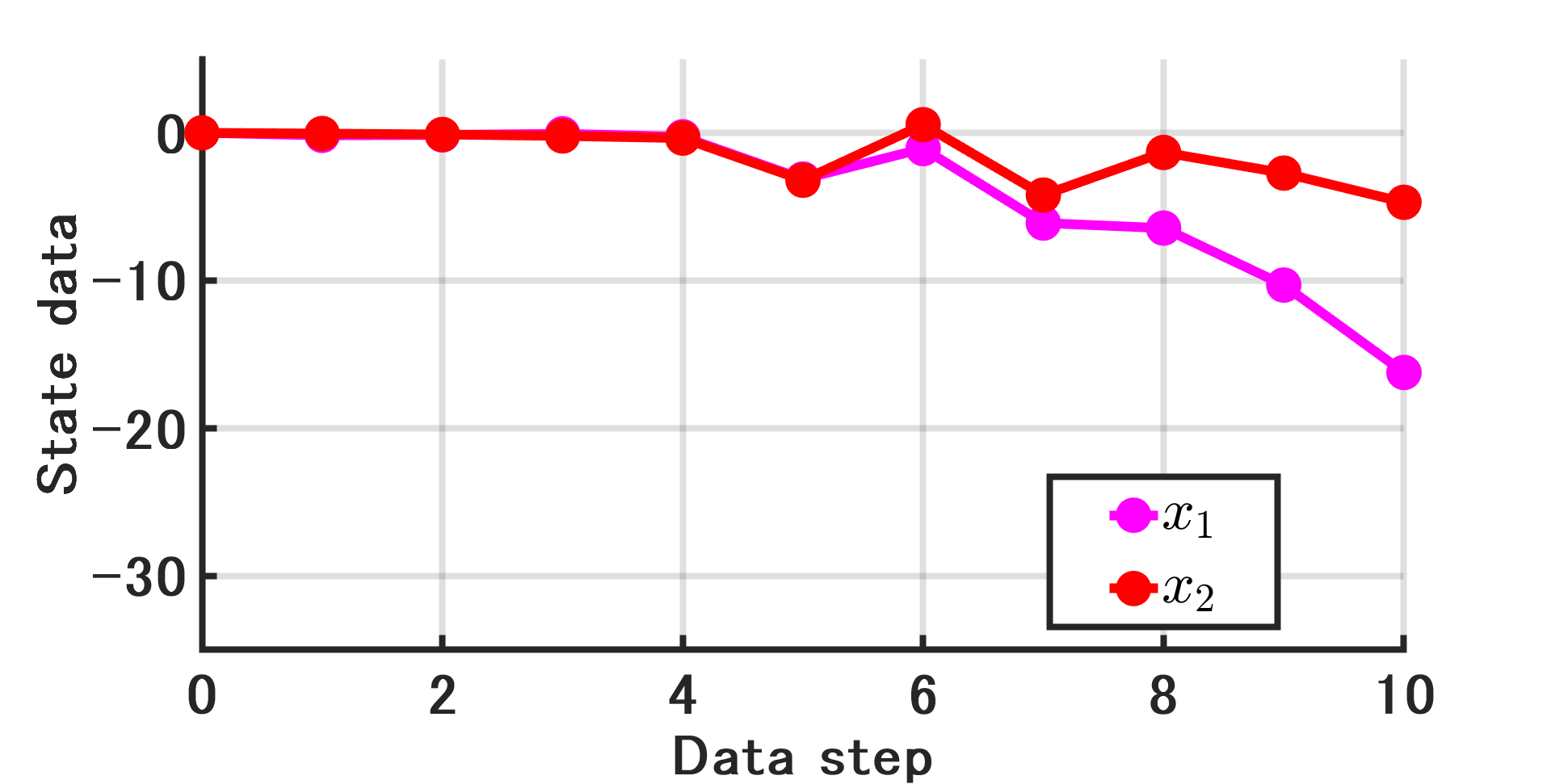}\\
(b) State data
\caption{Input-state data ${\mathcal{D}}$}
\label{fig:sim_data}
\end{centering}
\end{figure}

As for the inequality (\ref{noise_assumption}), we set 
\begin{align*}
\Phi = \begin{bmatrix} \sigma^2 T I_2 & 0 \\ 0 & - I_{10} \end{bmatrix}.
\end{align*}
This implies that the covariance of the noise $w_t$ is not greater than $\sigma^2 I_2$. 
Note that we have executed Algorithm~1 and  Algorithm~2 using the SDP solver in MOSEK to obtain a feasible solution. 

In this numerical example, we set
\[
Q_0 = 1.0\times10^{-4}\, I_2,\quad Q_1=Q_2=Q_3=Q_4 = 1.0\times 10^{-4},
\quad 
R = 3.0\times10^{-4} .
\]
We selected these weight values to ensure numerical stability.

\subsection{Noise bound $\sigma$}
With the above weighting matrices, the range of $\sigma$ for which the convex program (CP) is feasible and $\Sigma_{\mathcal{D}} \neq \emptyset$ is
\[
0.0875 \le \sigma \le 0.1256 .
\]
 
To examine the effect of $\sigma$ on the control performance,  
we synthesize the sub-optimal LQ regulator for three different values:
$\sigma = 0.0875,\ 0.1000,\ 0.1256$.
We solve the convex optimization problem (CP) for these values and obtain the achievable performance level $\gamma_{\rm opt}$ and the corresponding feedback gain $K$ as summarized in Table~\ref{tab:K_gamma_matrix}.
Note that $\gamma_{\rm opt}$ is the optimum of the optimization problem (CP).

\begin{table}[H]
\small
\centering
\caption{Feedback gain $K$ and achievable performance level $\gamma_{\rm opt}$}
\label{tab:K_gamma_matrix}
\begin{tabular}{|c|c|c|}
\hline
$\sigma$ &  Feedback gain $K$ & $\gamma_{\rm opt}$ \\
\hline
$0.0875$ &
$\begin{bmatrix}
-1.2695 & -3.0682 & -3.1090 & -2.2355 & -1.5846 & -1.0983
\end{bmatrix}$ &
$4.1162 \times 10^{-2}$ \\[1ex]
$0.1000$ &
$\begin{bmatrix}
-3.7490 & -6.1393 & -6.7617 & -4.5652 & -2.9629 & -1.8011
\end{bmatrix}$ &
1.0296 \\[1ex]
$0.1256$ &
$\begin{bmatrix}
-8.1792 & -10.7806 & -12.5874 & -8.1347 & -4.9126 & -2.6109
\end{bmatrix}$ &
19.3357 \\
\hline
\end{tabular}
\end{table}

In all these cases, the eigenvalues of the closed-loop matrix $\mathcal{A} + \mathcal{B}K$
for $(A_s,B_s)$ lie inside the unit disc, and hence each controller stabilizes the closed-loop system.

\begin{figure}[H]
\begin{centering}
\includegraphics[width=13cm]{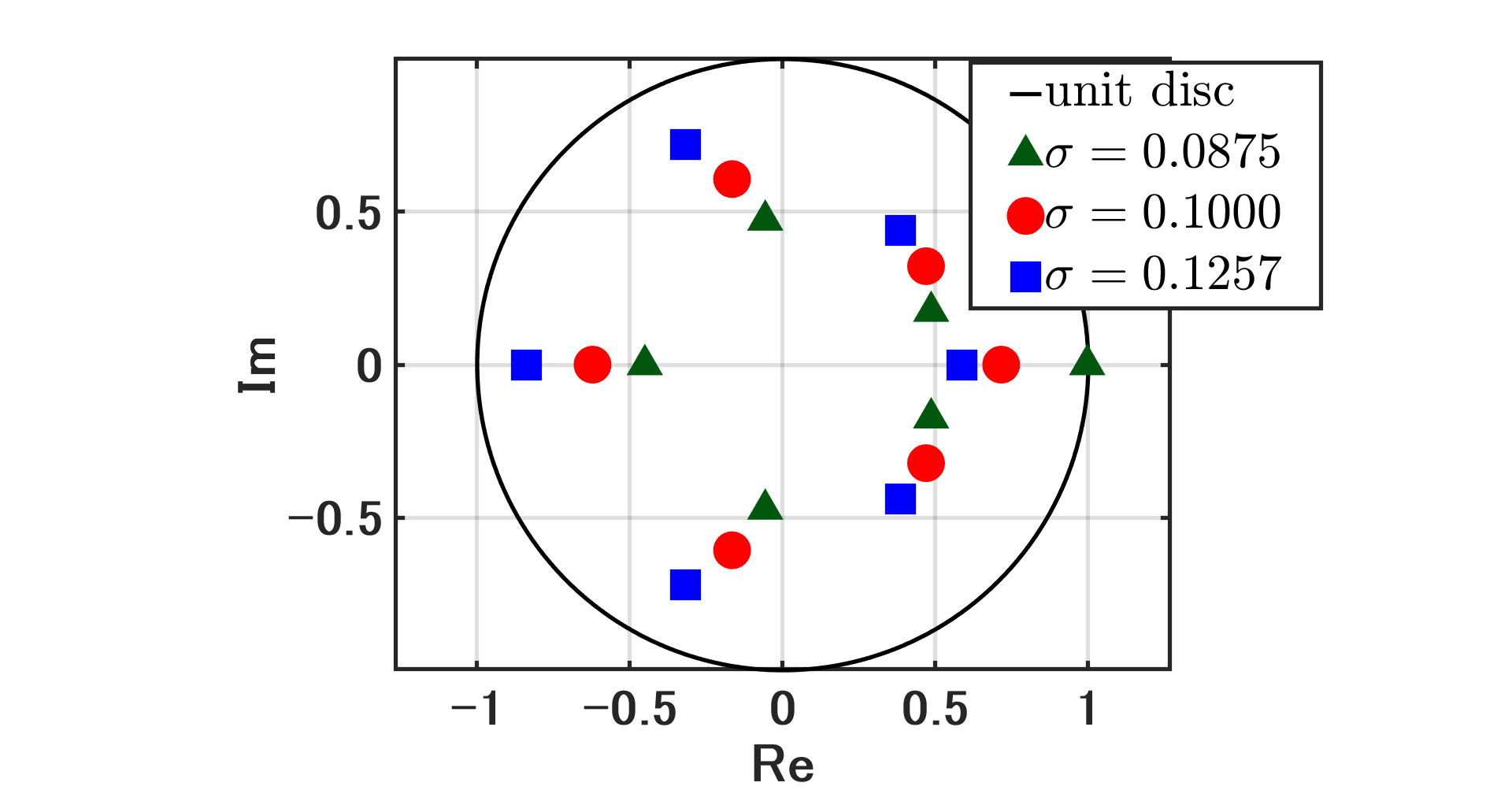}
\caption{Eigenvalues of closed-loop system for $A_s$and $B_s$\space$(\sigma=0.0875,\: 0.1000,\: 0.1257)$}\label{fig:eigenvalue_sigmacompare}
\end{centering}
\end{figure}

We conduct numerical simulations for these cases under the initial condition
\begin{equation*}
x_0 = \begin{bmatrix} 1 \\ -1\end{bmatrix},\ \:
u_{-1}=-1,\ u_{-2}=1,\ u_{-3}=-1,\ u_{-4}=1.
\end{equation*}
The simulation results are illustrated in Figure~\ref{fig:sig_compare}.  
For each value of $\sigma$, all the state variables are stabilized and decay to the origin.  
It can also be seen from the figure that the transient response of the closed-loop system strongly depends on the value of $\sigma$. 
To be specific, the convergence is very slow for $\sigma = 0.0875$ because of the closed-loop eigenvalue close to $1$.
As $\sigma$ increases, the convergence becomes faster. 
However, the transient response of the state variables is degraded and oscillatory for $\sigma=0.1257$, which is close to the upper bound of the feasible interval.

\begin{figure}[!bth]
\centering
\includegraphics[width=11.5cm]{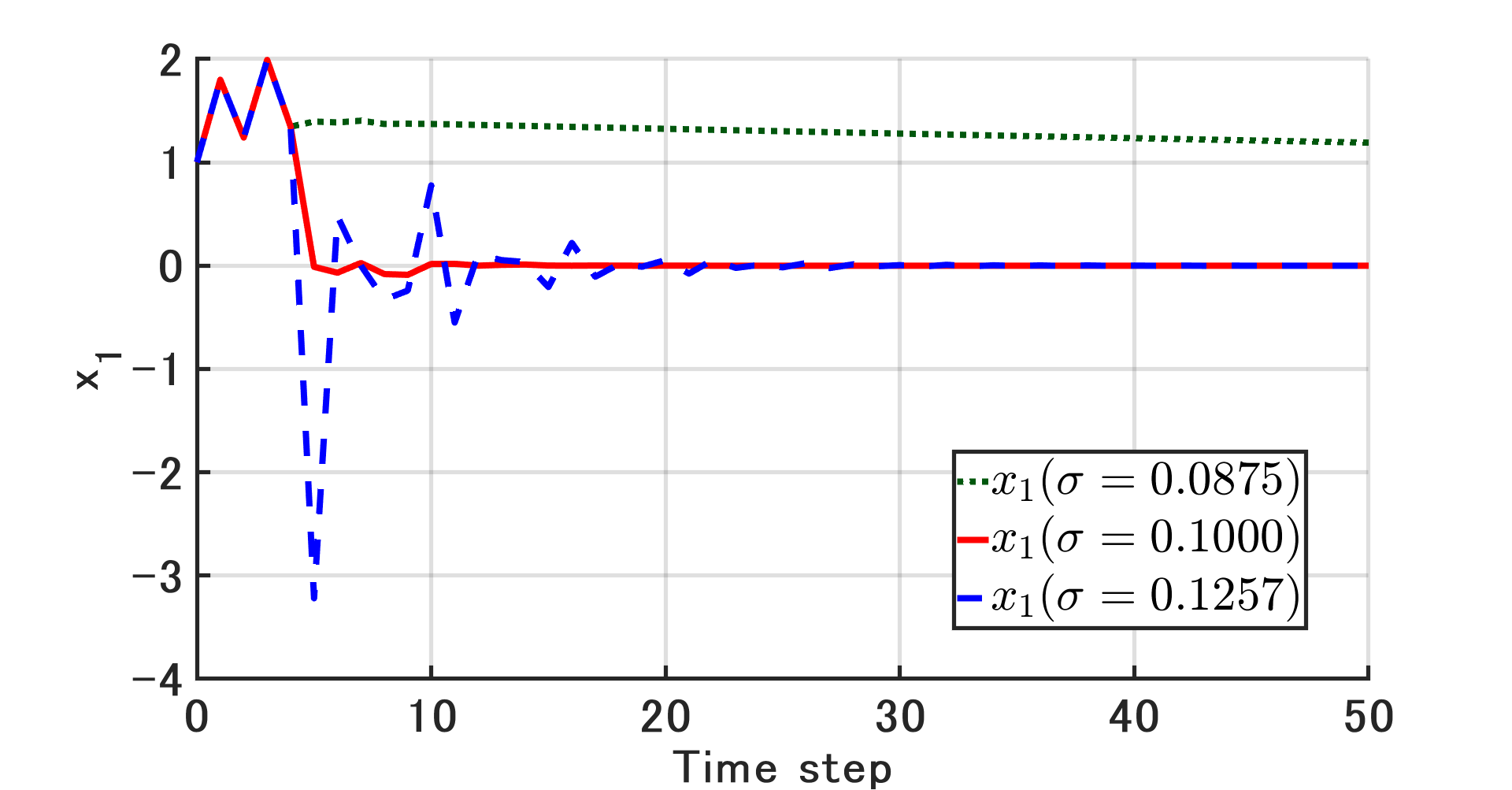}

\mbox{(a) state $x_{1t}$}

\medskip

\includegraphics[width=11.5cm]{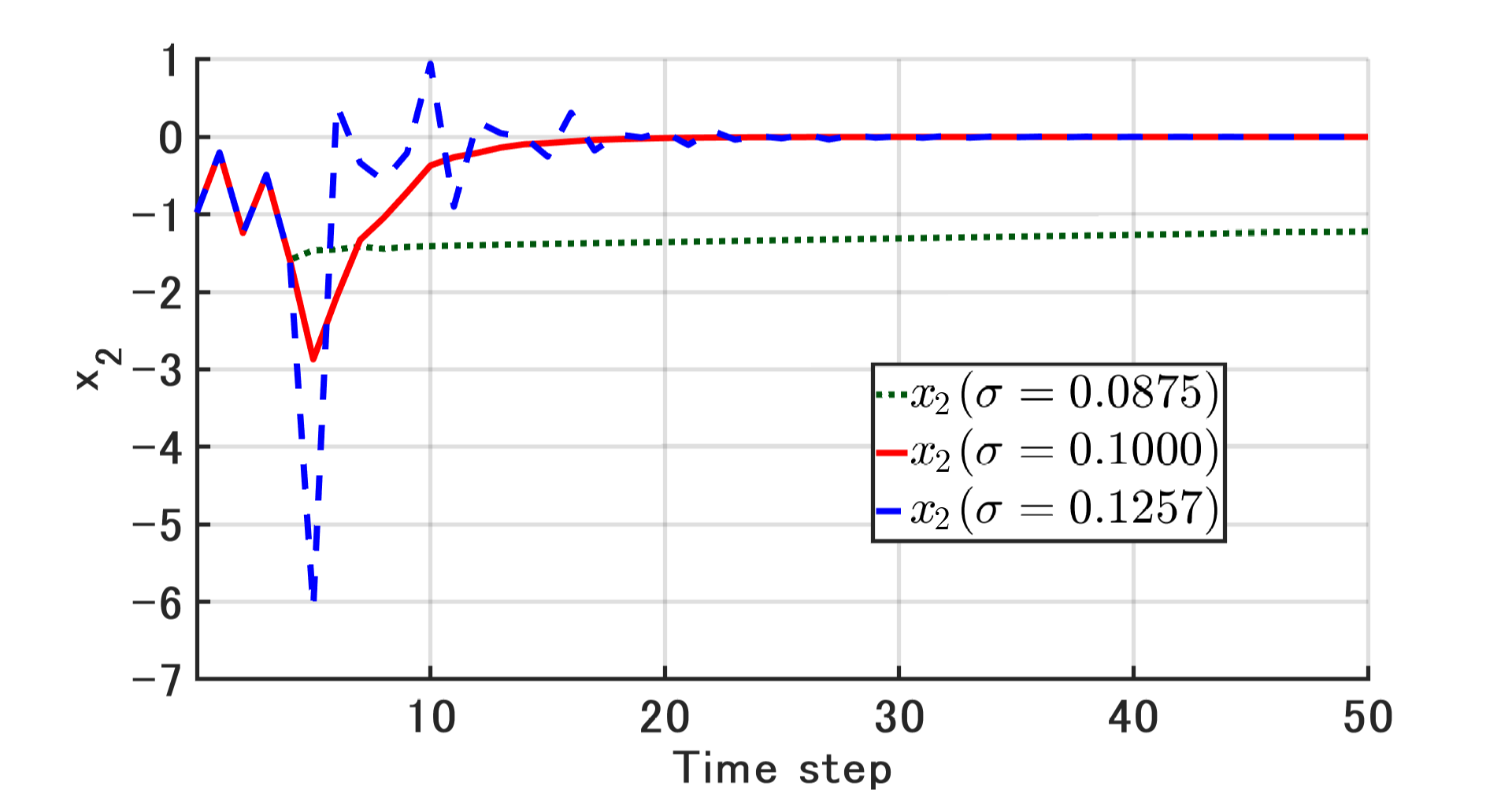}

\mbox{(b) state $x_{2t}$}

\includegraphics[width=11.5cm]{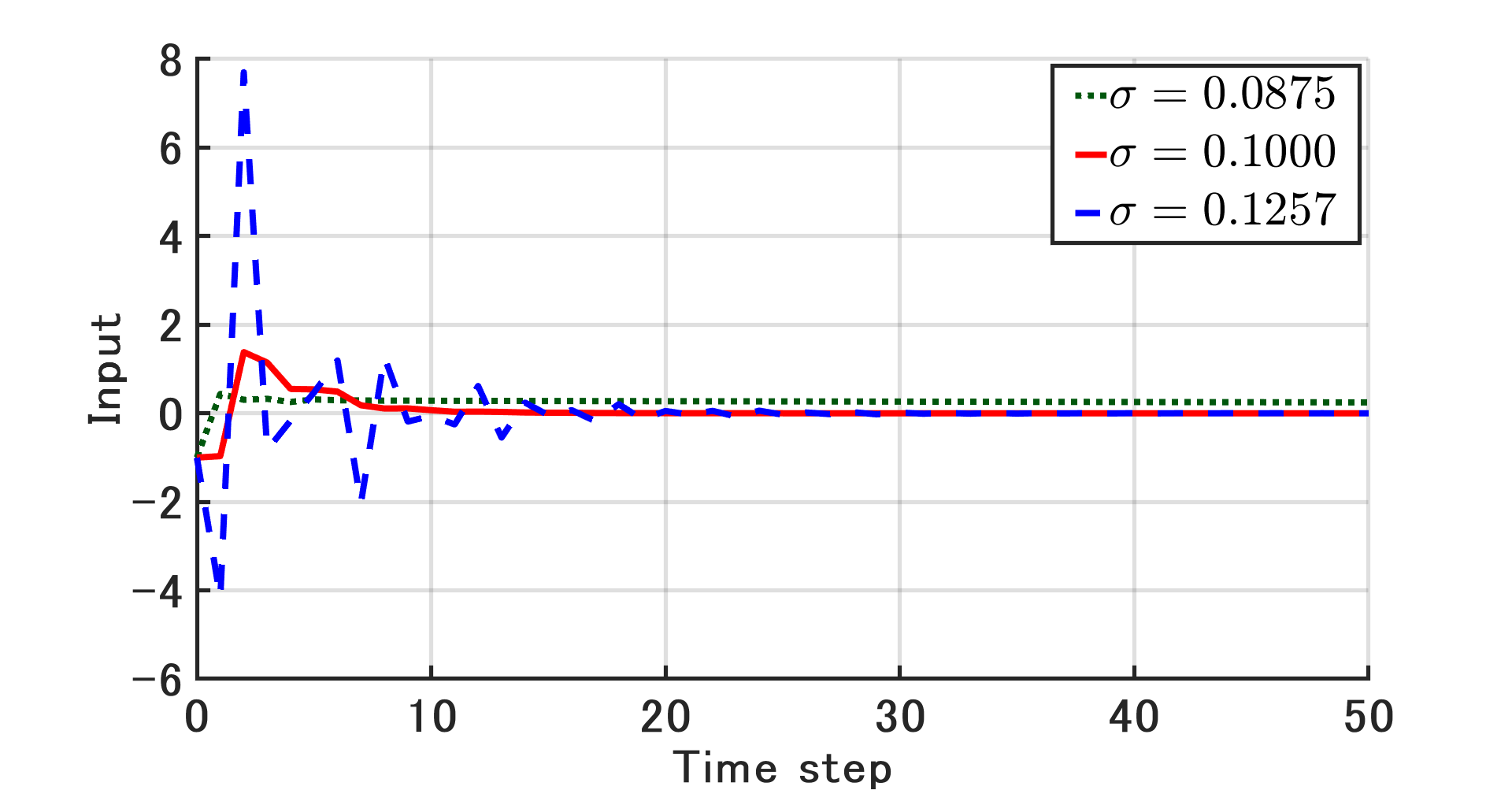}

\mbox{(c) input $u_t$}

\caption{Simulation Results for different values of $\sigma$}
\label{fig:sig_compare}
\end{figure}

The performance index values for the above simulations are summarized in Table~\ref{tab:performance_inequality}.
This table shows that the sub-optimal control performance is satisfied and that the value of $J$ deteriorates as $\sigma$ approaches either the upper or lower bound.

\begin{table}[!hbt]
\centering
\caption{Performance index values for different $\sigma$}
\label{tab:performance_inequality}
\begin{tabular}{cll}
\hline
$\sigma$ & Performance index $J$  & Upper bound $\gamma_{\rm opt}\|X_0\|^2$ \\
\hline
$0.0875$ & $6.788\times10^{-2} $& $\,\ \ \ 0.2470$ \\ 
$0.1000$    & $7.847\times10^{-3}$ & $\,\ \ \ 1.0296$ \\
$0.1257$ & $6.682\times10^{-2}$ & $116.0141$ \\ 
\hline
\end{tabular}
\end{table}

\begin{figure}[!htb]
\begin{centering}
\includegraphics[width=11.5cm]{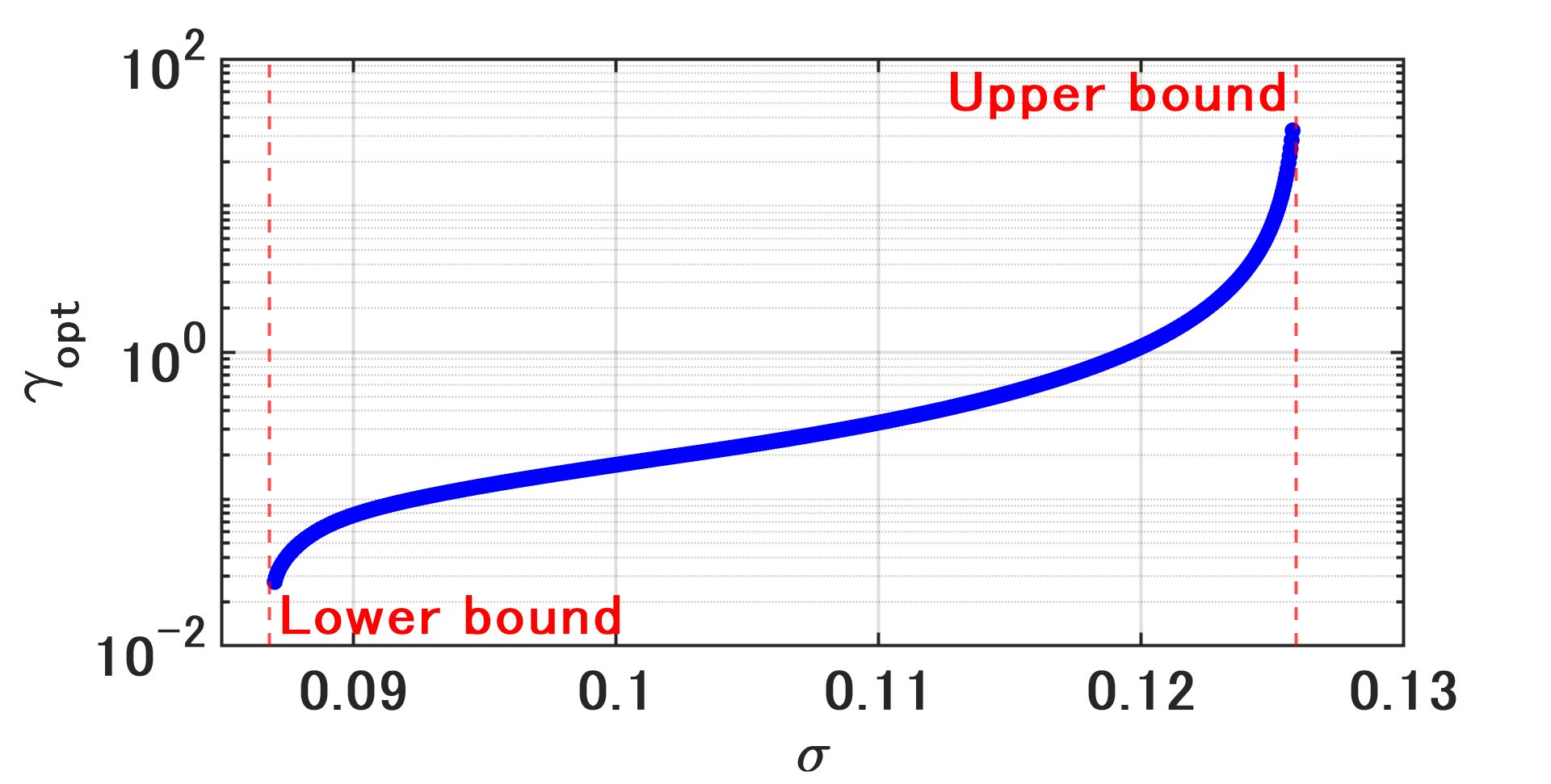}
\caption{Relation between $\sigma$ and achievable performance}
\label{fig:noise_bound}
\end{centering}
\end{figure}


Figure~\ref{fig:noise_bound} shows the change in the achievable performance level $\gamma_{\mathrm{opt}}$ with respect to $\sigma$.
It can be observed from the figure that $\gamma_{\mathrm{opt}}$ is monotone increasing with respect to $\sigma$.
This is because a larger value of $\sigma$ enlarges the set $\Sigma_{\mathcal{D}}$, and hence the stability and control performance must be guaranteed for a wider class of plants.
In particular, as $\sigma$ approaches the upper bound of the feasible interval,  robust stabilization of the plant set $\Sigma_{\mathcal{D}}$ becomes very difficult, and hence $\gamma_{\mathrm{opt}}$ diverges.
On the other hand, when $\sigma$ is small, $\Sigma_{\mathcal{D}}$ becomes smaller, which allows $\gamma_{\mathrm{opt}}$ to take a smaller value.

\subsection{Comparison with Previous Work~\cite{sicefes_ayaka}}

In the authors' previous work~\cite{sicefes_ayaka}, 
a necessary and sufficient condition for the informativity of noisy data for quadratic stabilization of a linear input-delay system.
In this sub-section, we compare the present result in Section~4 with the previous work~\cite{sicefes_ayaka} by synthesizing the stabilizing controllers with $\sigma = 0.1000$ and the same data set $\mathcal{D}$.
The feedback gain $K$ fot the previous work is obtained as
\begin{equation}
K =
\left[
\begin{array}{rrrrrr}
-6.0270 & -9.5519 & -10.6518 & -7.0103 & -4.3536 & -2.4217 \notag
\end{array}
\right].
\end{equation}
The closed-loop eigenvalues for this feedback gain are given by
\[
\{-0.1916,\,-0.4589 \pm 0.3387i,\,0.8277,\,0.1800 \pm 0.5877i\}.
\]

\begin{figure}[!thb]
\centering
\includegraphics[width=13.5cm]{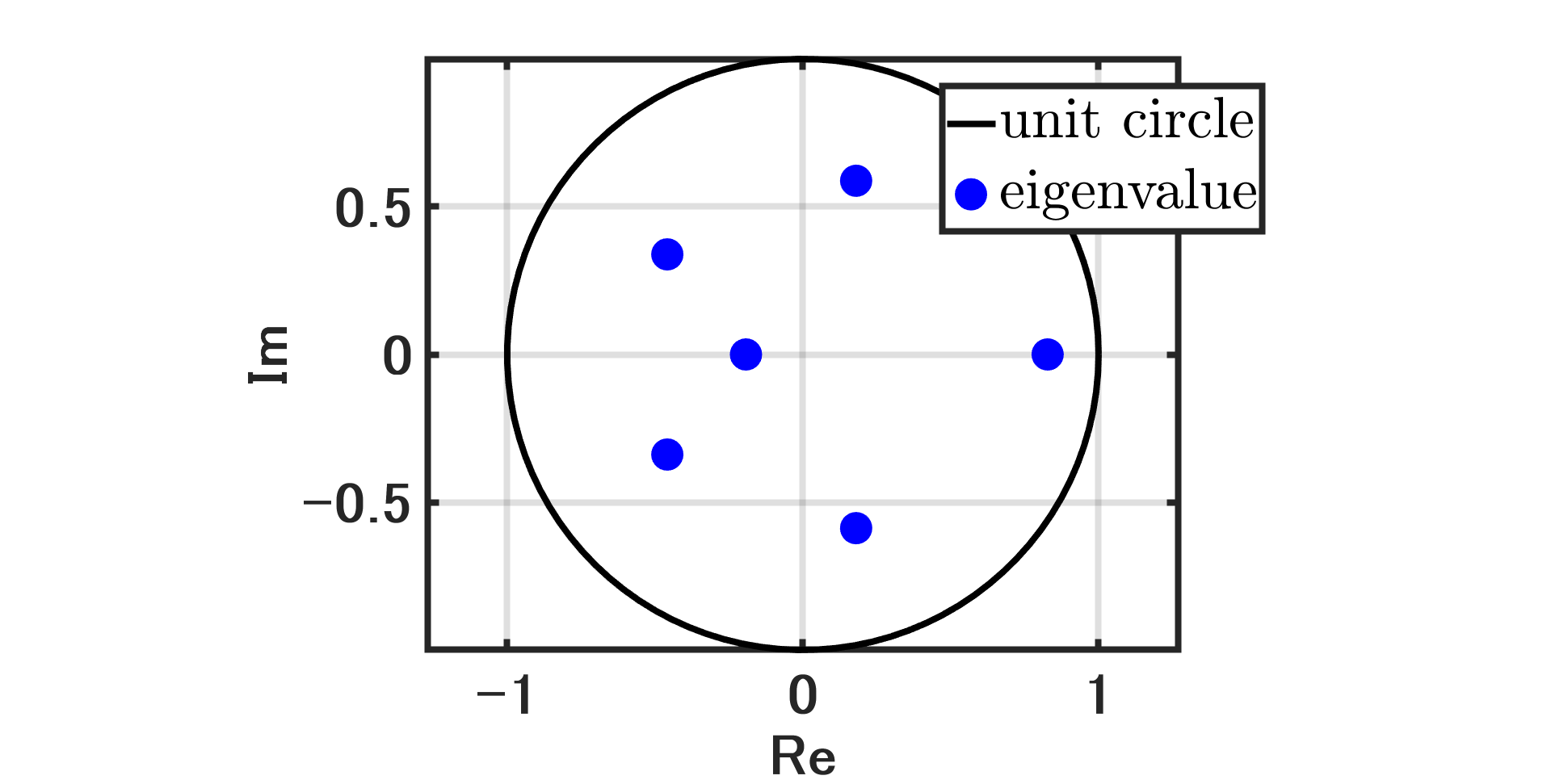}
\caption{Closed-loop eigenvalues for Ayaka \emph{et al.}(2025)}
\end{figure}

\begin{figure}[!thb]
\centering
\includegraphics[width=11.5cm]{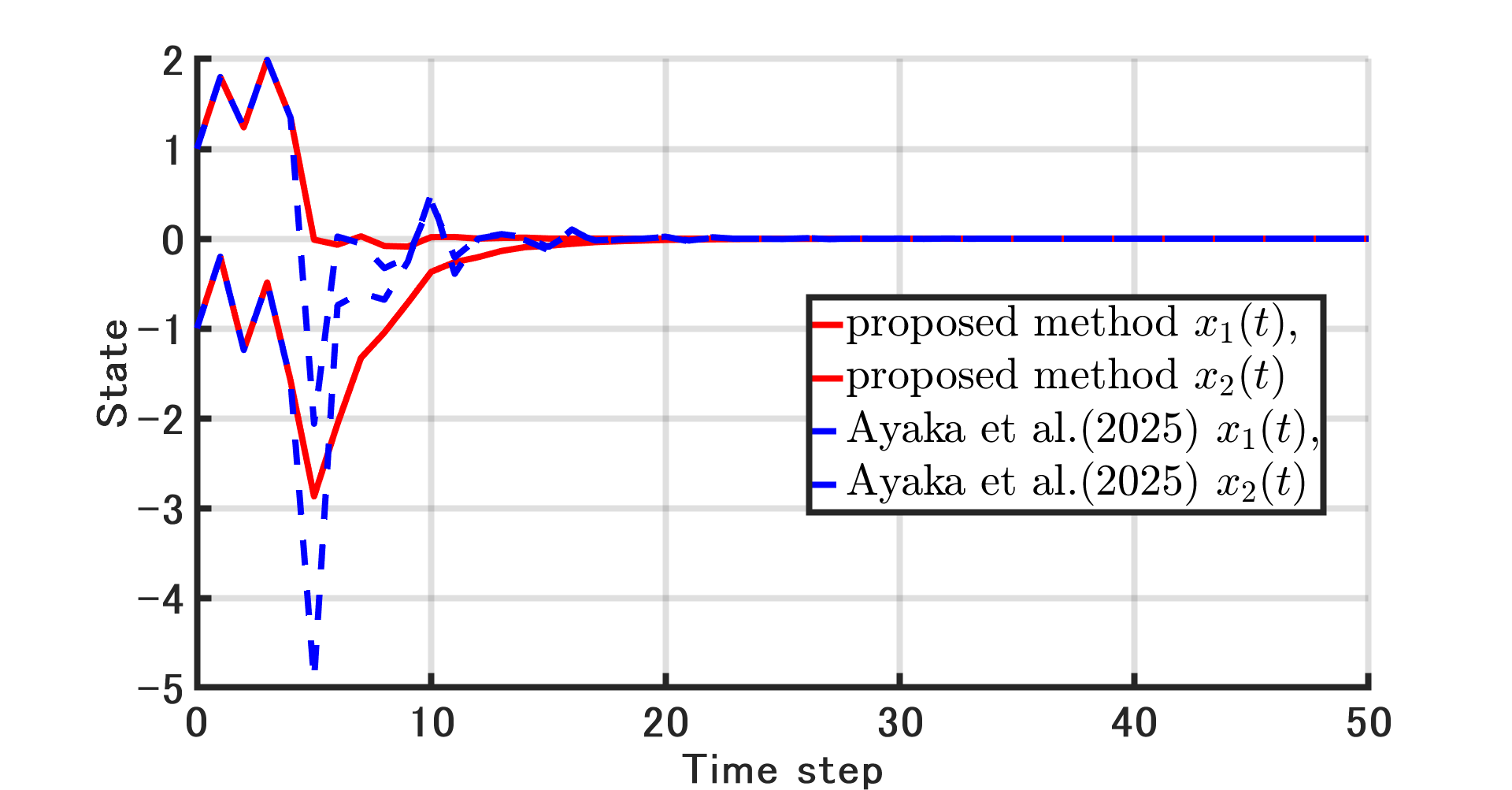}
\\ 

(a) State $x_t$

\medskip

\includegraphics[width=11.5cm]{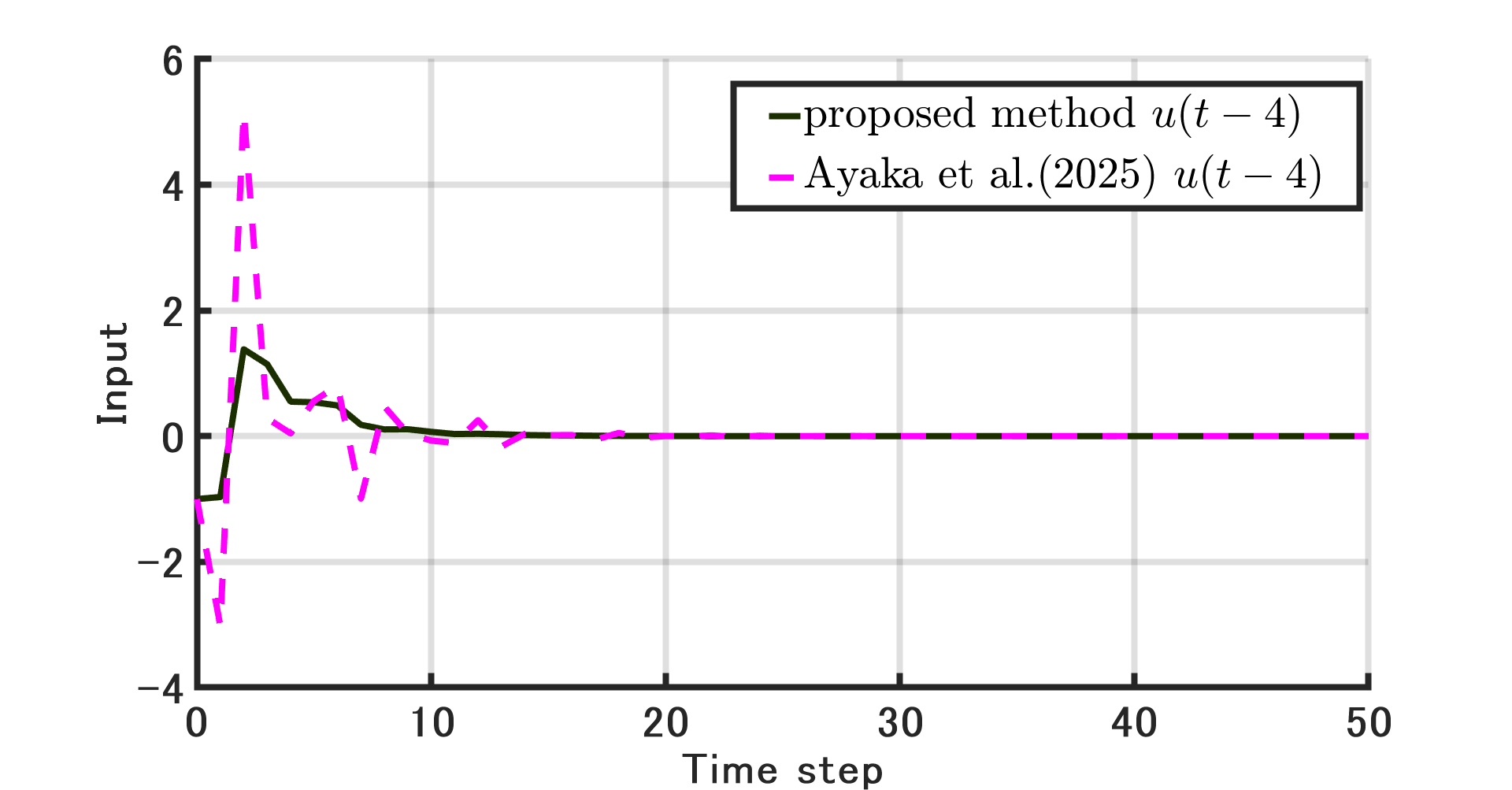}
\\

(b) Input $u_t$

\caption{Simulation results (comparison with the previous work~\cite{sicefes_ayaka})}
\label{fig:compare_previous}
\end{figure}

We conduct a simulation with $\sigma=0.1$ and the same initial values as in the previous sub-section. The proposed controller is synthesized with the same weights $Q_i$ $(i=0,1,\dots,4)$ and $R$. 
The resulting state and input trajectories are shown in Figure~\ref{fig:compare_previous}. 
It is observed from the figure that the proposed method outperforms in the transient responses of the state and input variables, in comparison with the previous method~\cite{sicefes_ayaka},
The resulting control performance from the previous method is $J= 3.3069\times 10^{-2}$, that is much larger than the performance achieved by the proposed method $J=7.847\times10^{-3}$.
Thus, we can say that the proposed method achieves suboptimal control with improved performance.

\section{Concluding Remarks}
\label{sect:6}
In this paper, we have proposed a novel data-driven method for sub-optimal LQR synthesis for linear input-delay systems, based on the informativity approach~\cite{sankou1}.
The proposed method enjoys the structure of input-delay systems and utilizes the noisy input-state data of the original plant \eqref{ids} rather than that of the augmented plant \eqref{eq:5}. This enables us to derive an LMI condition for the informativity of the noisy data for sub-optimal LQR synthesis in a compact form.
As verified through the numerical simulation, the proposed method improves the control performance in comparison with the authors' previous work~\cite{sicefes_ayaka} by achieving the LQ sub-optimality.
Moreover, the simulation results suggest that we need to carefully choose the parameter $\sigma$, which characterizes the plant models consistent with the noisy data, since it strongly affects the performance of the resulting closed-loop system,

As a future research direction, it remains to extend the present results of this paper to practical data-driven control designs for linear systems with delays, such as tracking control, remote control, networked control, and their experimental validations.

\section*{Notes on Contributors}
\paragraph*{Kohei Ayaka}
He received his B.Eng. from Ritsumeikan University, Japan in 2023. 
He is currently working toward his Master's degree in the Department of Electrical and Electronic Engineering, Ritsumeikan University. 
His research interests include data-driven control and its applications to delay systems. 
He is a student member of ISCIE.

\paragraph*{Takumi Namba}
He received his B.Eng., M.Eng., and Dr.Eng. degrees from Ritsumeikan University, Japan in 2018, 2020 and 2024, respectively. 
From April 2020 to August 2021, he was with Grid Aggregation Division, Toshiba Energy and Systems Solutions Corporation, Japan. 
From April 2024, he is an Assistant Professor at the Department of Electrical and Electronic Engineering, Ritsumeikan University. 
His current research interests include distributed optimization and control of multi-agent systems and their applications to energy management. 
He is a member of SICE, ISCIE, and IEEE.

\paragraph*{Kiyotsugu Takaba}
He received his B.Eng., M.Eng., and Dr.Eng. degrees all from Kyoto University, Japan, in 1989, 1991, and 1995, respectively. From 1991 to 1998, he was an Assistant Professor at the Department of Applied Mathematics and Physics, Kyoto University. From 1998 to 2012, he was an Associate Professor at the same department. 
In 2012, he joined the Department of Electrical and Electronic Engineering, Ritsumeikan University, where he is currently a Professor. 
His current research interests include the control and estimation of multi-agent systems and their applications. He is a member of ISCIE, IEEE, and SIAM.
\color{black}

\appendix

\section{Proof of Lemma~\ref{lem:expanded_S_lemma}}

(Necessity) Firstly, notice that (\ref{eq:8}) implies $P_c - Q_c > 0$. Then, since $S_N$ is a closed set, there exists a scalar constant $\varepsilon'>0$ satisfying 
\begin{align}
\label{slemmaproof1}
    \begin{bmatrix}
P'_a & {P'_b}^\top \\
P'_b & P'_c
\end{bmatrix}
>
\begin{bmatrix}
Z^\top Q_a Z & Z^\top Q_b^\top \\
Q_b Z & Q_c+\varepsilon'I_\ell
\end{bmatrix}
\quad \forall Z \in \mathcal{S}_N
.
\end{align}
Noting $P_c - Q_c-\varepsilon'I_\ell>0$, application of the Schur complement formula yields
\begin{align*}
& 
\begin{bmatrix}
P'_a & {P'_b}^\top \\
P'_b & P'_c
\end{bmatrix}
> 
\begin{bmatrix}
Z^\top Q_a Z & Z^\top Q_b^\top \\
Q_b Z & Q_c+\varepsilon'I_\ell
\end{bmatrix}\\
& \Leftrightarrow P'_a-Z^\top Q_a Z -(P'_b-Q_bZ )^\top(P'_c-Q_c-\varepsilon'I_\ell)^{-1}(P'_b-Q_bZ )
> 0 \\
&\Leftrightarrow
\begin{bmatrix}
I_n \\
Z
\end{bmatrix}^\top
\begin{bmatrix}
P'_a - {P'_b}^\top \Xi^{-1} P'_b & P'_b \Xi^{-1} Q_b^\top \\
Q_b \Xi^{-1} {P'_b}^\top & -Q_a - Q_b \Xi^{-1} Q_b^\top
\end{bmatrix}
\begin{bmatrix}
I_n \\
Z
\end{bmatrix} \notag
> 0
, 
\end{align*}
where we have defined $\Xi=P'_c - Q_c-\varepsilon'I_\ell >0$. Thus, (\ref{slemmaproof1}) is equivalent to
\begin{align}
\label{eq.Xi_lemma}
\begin{bmatrix}
I_n \\
Z
\end{bmatrix}^\top
\begin{bmatrix}
P'_a - {P'_b}^\top \Xi^{-1} P'_b & P'_b \Xi^{-1} Q_b^\top \\
Q_b \Xi^{-1} {P'_b}^\top & -Q_a - Q_b \Xi^{-1} Q_b^\top
\end{bmatrix}
\begin{bmatrix}
I_n \\
Z
\end{bmatrix} 
> 0 \quad \forall Z \in \mathcal{S}_N.
\end{align}
We get $-Q_a - Q_b^\top \Xi^{-1} Q_b \le 0$ from $\Xi > 0$ and the assumption $Q_a> 0$. 
It then follows from the matrix $S$-lemma (Theorem 13 in \cite{sankou2}) that (\ref{eq.Xi_lemma}) holds if and only if there exist constants $\varepsilon >0$ and $\alpha \ge 0$ satisfying
\begin{align*}
\begin{bmatrix}
P'_a - {P'_b}^\top \Xi^{-1} P'_b-\varepsilon I_m & P'_b \Xi^{-1} Q_b^\top \\
Q_b \Xi^{-1} {P'_b}^\top & -Q_a - Q_b \Xi^{-1} Q_b^\top
\end{bmatrix}
\ge \alpha
\begin{bmatrix}
N_a & N_b^\top \\
N_b & N_c
\end{bmatrix}.
\end{align*}
By applying the Schur complement formula again to this inequality, we obtain the matrix inequality of (\ref{eq:9}). This completes the proof of the necessity.

\medskip

(Sufficiency)
Assume that there exist positive constants $\varepsilon, \varepsilon'$ and a non-negative constant $\alpha$ satisfying (\ref{eq:9}). 
Pre- and post-multiplying (\ref{eq:9}) by 
{\small $\begin{bmatrix}
I_n & 0 \\
Z & 0 \\
0 & I_\ell
\end{bmatrix}^\top$} and its transpose, respectively, yield
\begin{align*}
\begin{bmatrix}
P'_a & {P'_b}^\top\\
P'_b & P'_c
\end{bmatrix}
-\begin{bmatrix}
Z^\top Q_a Z & Z^\top Q_b^\top \\
Q_b Z & Q_c
\end{bmatrix}
\ge \alpha
\begin{bmatrix}
\begin{bmatrix}
I_n\\Z    
\end{bmatrix}^\top N \begin{bmatrix}
I_n\\Z    
\end{bmatrix} & 0 \\
0 & 0
\end{bmatrix}+
\begin{bmatrix}
\varepsilon I_n & 0 \\
0 & \varepsilon' I_\ell
\end{bmatrix}
.\end{align*}
Since $\varepsilon, \varepsilon'>0$ and $\alpha \ge 0$, this inequality implies that
\begin{align*}
\begin{bmatrix}
P'_a & {P'_b}^\top\\
P'_b & P'_c
\end{bmatrix}
-\begin{bmatrix}
Z^\top Q_a Z & Z^\top Q_b^\top \\
Q_b Z & Q_c
\end{bmatrix}
\ge
\begin{bmatrix}
\varepsilon I_n & 0 \\
0 & \varepsilon' I_\ell
\end{bmatrix}
> 0
\end{align*}
holds for every $Z\in \mathcal{S}_N$. This completes the proof of sufficiency.

\end{document}